\newtheorem{PHONY}{PHONY}
\newtheorem{definition}[PHONY]{Definition}
\newtheorem{proposition}[PHONY]{Proposition}
\newtheorem{lemma}[PHONY]{Lemma}
\newtheorem{theorem}[PHONY]{Theorem}
\newtheorem{corollary}[PHONY]{Corollary}
\tikzstyle{world}=[minimum width=0pt,inner sep=1pt,circle,draw=black,fill=none]
\newcommand{\tuple}[1]{\left\langle #1 \right\rangle}
\newcommand{\powset}{\mathcal{P}}
\newcommand{\card}[1]{\left|#1\right|}
\newcommand{\N}{\mathbb{N}}
\renewcommand{\th}{\text{th}}
\newcommand{\window}{\boxplus}
\newcommand{\APset}{\mathrm{AP}}
\newcommand{\model}[1]{\mathcal{#1}}
\newcommand{\bigO}{\mathcal{O}}
\newcommand{\bigTheta}{\Theta}
\newcommand{\size}[1]{\left| #1 \right|}
\newclass{\ATIME}{ATIME}
\newclass{\APclass}{AP}
\newclass{\ND}{ND}
\newlang{\TQBF}{TQBF}
\newlang{\MC}{MC}
\newlang{\MCanti}{\overline{MC}}
\newcommand{\PosSem}{\mathtt{PosSem}}
\newcommand{\NegSem}{\mathtt{NegSem}}
\newcommand{\SM}{\model{S}}
\newcommand\blfootnote[1]{%
  \begingroup
  \renewcommand\thefootnote{}\footnote{#1}%
  \addtocounter{footnote}{-1}%
  \endgroup
}
\title{Complexity of the Model Checking problem for inquisitive propositional and modal logic}
\author[1]{Gianluca Grilletti}
\affil[1]{Munich Center for Mathematical Philosophy}
\author[2]{Ivano Ciardelli}
\affil[2]{University of Padua}
\begin{document}

\maketitle

\blfootnote{Gefördert durch die Deutsche Forschungsgemeinschaft (DFG), Projektnummer 446711878. Funded by the German Research Foundation (DFG), project number 446711878.}

\section{Introduction}\label{section:introduction}

The aim of this paper is to study the complexity of the \emph{model checking problem} $\MC$ for \emph{inquisitive propositional logic} $\inqB$ and for \emph{inquisitive modal logic} $\inqM$, that is, the problem of deciding whether a given finite structure for the logic satisfies a given formula.\footnote{In the terminology introduced by Vardi in \cite{Vardi:82}, this is the so-called \emph{combined complexity} for the model checking problem.}
In particular, we prove that both problems are $\APclass$-complete.

\emph{Inquisitive logics} are a family of formalisms that extend classical and non-classical logic to encompass \emph{questions}.
Traditionally, logic is concerned with \emph{statements}, that is, expressions completely determined by their \emph{truth-conditions}.
The standard approaches to semantics are based on this assumption and are usually formulated in terms of truth-assignment (e.g., \emph{Boolean valuations} in propositional logic).
However, this semantical approach does not allow us to study expressions whose meaning is \emph{not} determined in terms of truth-conditions, such as \emph{questions}.

In order to interpret questions, which are not true or false, the semantics of \emph{Inquisitive logic} is based on a relation called \emph{support} between and \emph{information states} and a sentence.
This allows to give a uniform semantic account for statement and questions alike:
an information state \emph{supports} a statement if it \emph{implies} the statement;
and it \emph{supports} a question if it \emph{resolves} the question.

The particular instances of inquisitive logic we consider in this paper are $\inqB$ and $\inqM$, extending propositional logic and modal logic respectively with the question-forming operators.
In these systems information states are formalized as \emph{sets of truth-assignments}, which gives us a much richer semantical structure than in the classical setting.
The main question we tackle in this paper is:
how hard is it to \emph{decide if a formula of $\inqB$ (resp., $\inqM$) is supported by an information state in a given model}?

The problem of deciding whether a formula $\phi$ of a logic $L$ is satisfied by a model $\model{M}$ is referred to as the \emph{model checking problem for $L$} (in symbols $\MC(L)$).
The computational complexity of this problem is known for several logics (see, e.g., \cite{Clarke:01} for an overview of the classical results).
In recent years, the problem has been addressed also 
for a class of logics called \emph{team logics} which, like our inquisitive systems, are interpreted relative to sets of assignments (see, e.g., \cite{Ebbing:12,Ebbing:13} and \cite[Ch.~7]{Yang:14}).

Although there is a close connection between these logics and inquisitive logics, it was still an open question what are the complexities of $\MC(\inqB)$ and $\MC(\inqM)$.\footnote{Both model checking problems have been investigated in \cite{Zeuner:20} (unpublished), building up on previous work by Yang \cite{Ebbing:13}.}
In this paper we give a reduction of the $\PSPACE$-complete problem \emph{true quantified Boolean formulas} $\TQBF$ (see, e.g., \cite[Ch.~8]{Sipser:13}) to $\MC(\inqB)$, thus settling that both $\MC(\inqB)$ and $\MC(\inqM)$ are \emph{$\PSPACE$-complete}.

The paper is structured as follows:
In Section \ref{section:preliminaries} we present the basic notions of inquisitive logic used throughout the paper.
In Section \ref{section:algorithm} we introduce suitable encodings of $\inqB$ and $\inqM$ structures and formalize the model checking problems $\MC(\inqB)$ and $\MC(\inqM)$.
Moreover, in the same section we present an algorithm for \emph{alternating Turing machines} to solve $\MC(\inqM)$ and use it to show that $\MC(\inqM)$---and a fortiori also $\MC(\inqBQ)$---is in the class $\PSPACE$.
In Section \ref{section:complexity} we present and study a polynomial-space reduction of the $\TQBF$ problem to the $\MC(\inqB)$, thus showing that the problem is also $\PSPACE$-hard, and we thus infer that $\MC(\inqB)$ and $\MC(\inqM)$ are both $\PSPACE$-complete.
We conclude in Section \ref{section:conclusions} with some remarks and directions for future work.

\section{Preliminaries}\label{section:preliminaries}

In this section we recall some basic notions of inquisitive logic that we use throughout the paper.
As an extended introduction on the topic of inquisitive logic we suggest \cite{Ciardelli:16}.
As for computational complexity, we assume the reader to be already familiar with the basic notions on the topic (we refer to \cite{Sipser:13} for an introduction) and we limit ourselves to recall the main results used in the paper.

\subsection{Inquisitive propositional and modal logic}

Henceforth, we indicate with $\APset$ a fixed set of atomic propositions.

\begin{definition}[Syntax of $\inqB$ and $\inqM$]
    The set of formulas of $\inqM$---also called \emph{inquisitive modal formulas}---is defined by the following grammar:
    \begin{equation*}
        \phi \;::=\;  \bot  \;|\;  p  \;|\;  \phi\land\phi  \;|\;  \phi\ivee\phi  \;|\;  \phi\to\phi  \;|\;  \Box \phi  \;|\;  \window \phi
    \end{equation*}
    where $p \in \APset$.
    The set of \emph{inquisitive (propositional) formulas} $\inqB$ is the set of $\inqM$ formulas not containing the symbols $\Box$ and $\window$.
\end{definition}

\noindent
So the language of inquisitive logic is the standard language of modal logic extended with $\ivee$ and $\window$.
The operator $\ivee$ is called \emph{inquisitive disjunction}, while $\window$ is the \emph{window modality}.
Additionally, we introduce the following standard shorthands:
\begin{equation*}
    \neg \phi := \phi \to \bot
        \hspace{5em}
    \phi \vee \psi := \neg ( \neg \phi \land \neg \psi )
\end{equation*}

\noindent
The role of $\ivee$ is to enhance the language of modal logic with \emph{alternative questions}.
For example, the formula $p \ivee \neg p$ stands for the \emph{natural language question} ``does $p$ hold (or not)?''.
Instead, the derived operator $\vee$ corresponds to the usual disjunction from propositional logic:
for example, the expression $p \vee \neg p$ is intuitively interpreted as the \emph{sentence} ``$p$ holds or $p$ does not hold''.
We will show the formal difference in the interpretation of these two formulas after introducing the semantics of the logic.

The operator $\window$ plays the role of an \emph{inquisitive modality}, sensitive to the inquisitive structures of the models.
For example, in the context of epistemic logic (see, e.g., \cite{CiardelliRoelofsen:15idel}) where the operator $\Box$ is interpreted as \emph{knowledge} of an abstract agent (e.g., $\Box p$ usually stands for ``the agent knows that $p$''), the operator $\window$ is roughly interpreted as a \emph{wondering} operator, encoding what the agent wonders about (e.g., $\window ?p$ stands for ``the agent wonders whether $p$ is the case'').

As pointed out in Section \ref{section:introduction}, the presence of questions requires to move from a semantics based on \emph{truth-assignments} to one based on \emph{information states}.
To formalize this intuition, the logic employs special semantic structures, called \emph{information models}.



\begin{definition}[Information models]\label{definition:information model}
    An \emph{information model for $\inqB$} is a tuple of the form $\model{M} = \tuple{ W, V }$ where $W$ is a non-empty set (the \emph{worlds} of the model) and $V:\APset \to \powset(W)$ is a function called the \emph{valuation} of the model.

    An \emph{information model for $\inqM$} is a tuple of the form $\model{M} = \tuple{ W, V, \Sigma }$ where $W$ and $V$ are as before, and $\Sigma : W \to \powset(\powset(W))$ is a function called an \emph{inquisitive state map}, satisfying the following conditions for every $w\in W$:
    \begin{enumerate}
        \item $\Sigma(w) \ne \emptyset$;
        \item If $s \in \Sigma(w)$ and $t \subseteq s$, then $t \in \Sigma$.
    \end{enumerate}
    We refer to the second condition as \emph{downward closure of $\Sigma(w)$}.
\end{definition}

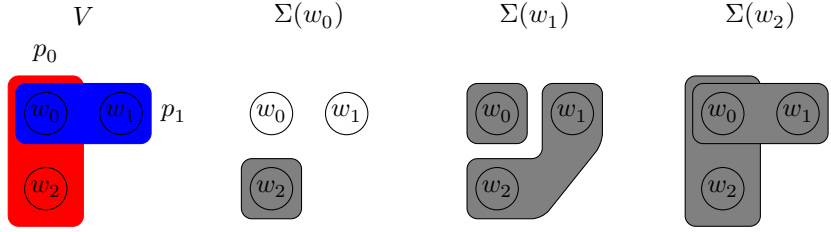
\begin{figure}
    \centering
    \begin{tikzpicture}
        
        \begin{scope}[shift={(0,0)}]
            \node at (.5,1.3) {$V$};

            \node at (0,.8) {$p_0$};
            \draw[info, red] (-.5,.5) rectangle (.5,-1.5);

            \node at (1.7,0) {$p_1$};
            \draw[info, blue] (-.4,.4) rectangle (1.4,-.4);

            \node[world] at (0, 0) {$w_0$};
            \node[world] at (1, 0) {$w_1$};
            \node[world] at (0,-1) {$w_2$};
        \end{scope}

        \begin{scope}[shift={(3,0)}]
            \node at (.5,1.3) {$\Sigma(w_0)$};

            \draw[info] (-.4,-.6) rectangle (.4,-1.4);

            \node[world] at (0, 0) {$w_0$};
            \node[world] at (1, 0) {$w_1$};
            \node[world] at (0,-1) {$w_2$};
        \end{scope}

        \begin{scope}[shift={(6,0)}]
            \node at (.5,1.3) {$\Sigma(w_1)$};

            \draw[info] (-.4,.4) rectangle (.4,-.4);
            \draw[info] (-.4,-.6) -- (.6,-.6) -- (.6,.4) -- (1.4,.4) -- (1.4,-.4) -- (.6,-1.4) -- (-.4,-1.4) -- cycle;

            \node[world] at (0, 0) {$w_0$};
            \node[world] at (1, 0) {$w_1$};
            \node[world] at (0,-1) {$w_2$};
        \end{scope}

        \begin{scope}[shift={(9,0)}]
            \node at (.5,1.3) {$\Sigma(w_2)$};

            \draw[info] (-.5,.5) rectangle (.5,-1.5);
            \draw[info] (-.4,.4) rectangle (1.4,-.4);

            \node[world] at (0, 0) {$w_0$};
            \node[world] at (1, 0) {$w_1$};
            \node[world] at (0,-1) {$w_2$};
        \end{scope}

    \end{tikzpicture}
    \caption{A graphical representation of an information model for $\inqM$ over the set of atoms $\APset = \{p_0, p_1\}$ and with set of worlds $W = \{w_0,w_1,w_2\}$.
    The valuation function $V$ is represented by the colored rectangles in the first image:
    $V(p_0) = \{ w_0, w_2\}$ (the red rectangle) and $V(p_1) = \{w_0,w_1\}$ (the blue rectangle).
    The map $\Sigma$ is represented in the other images by depicting the maximal states of each $\Sigma(w)$ for $w\in W$.
    So for example $\Sigma(w_2) = \{\, \{w_0,w_1\},\, \{w_0,w_2\} \,\}$, as depicted in the last image.
    Notice that to obtain a graphical representation of an information model for $\inqB$, we require only the first image of the sequence, that is, a representation of the valuation function $V$.
    }
    \label{fig:inqModel}
\end{figure}

\noindent
An example of information model for $\inqM$ is depicted in Figure \ref{fig:inqModel}.
Each world $w\in W$ is naturally associated with a \emph{Boolean valuation} over $\APset$, defined as
\begin{equation*}
    V_w(p) = \left\{ \begin{array}{ll}
        1  &\text{if } w \in V(p) \\
        0  &\text{otherwise}
    \end{array} \right.
\end{equation*}

\noindent
So we can think of each world as providing a \emph{complete description} of the current \emph{state of affairs}.
Under this interpretation, we can represent an information state as a \emph{set of worlds}, that is, all the worlds \emph{compatible} with the information state considered.
So for example, the information that ``$p$ holds'' can be represented as the set of worlds $w$ that assign truth value $1$ to $p$:
$\{\, w \in W \,|\, V_w(p) = 1 \,\} = V(p)$.
In line with this intuition, henceforth we refer to a subset $s \subseteq W$ as an \emph{information state} of the model.
The semantics of the logic is then defined relative to a model \emph{and} an information state.




\begin{definition}[Semantics of $\inqM$]
    Let $\model{M}$ be an information and $s$ an information state of the model.
    We define the semantic relation $\vDash$ of $\inqM$ by the following inductive clauses:
    \begin{equation*}
    \begin{array}{lll}
        \model{M}, s \vDash \bot
            &\iff
            &s = \emptyset  \\
        \model{M}, s \vDash p
            &\iff
            &s \subseteq V(p)  \;\;\iff\;\;  \forall w\in s.\; V_w(p) = 1  \\
        \model{M}, s \vDash \phi \land \psi
            &\iff
            &\model{M}, s \vDash \phi \text{ and } \model{M}, s \vDash \psi  \\
        \model{M}, s \vDash \phi \ivee \psi
            &\iff
            &\model{M}, s \vDash \phi \text{ or } \model{M}, s \vDash \psi  \\
        \model{M}, s \vDash \phi \to \psi
            &\iff
            &\text{For all $t\subseteq s$, if } \model{M}, t \vDash \phi \text{ then } \model{M}, t \vDash \psi  \\
        \model{M}, s \vDash \Box \phi
            &\iff
            &\text{For all $w\in s$, } \model{M}, \bigcup\Sigma(w) \vDash \phi  \\
        \model{M}, s \vDash \window \phi
            &\iff
            &\text{For all $t\in \Sigma[s]$, } \model{M}, t \vDash \phi
    \end{array}
    \end{equation*}

    \noindent
    If $\model{M}, s \vDash \phi$ we say that $s$ \emph{supports} $\phi$.
\end{definition}



\noindent
Intuitively, an information state $s$ supports a formula $\phi$ if the information represented by $s$ \emph{implies the statement represented by $\phi$} or \emph{resolves the question represented by $\phi$}.
For example, consider the formula $p \ivee \neg p$ representing the question ``Does $p$ hold?''.
The support conditions for this formula are:
\begin{equation*}
\begin{array}{ll}
    \model{M}, s \vDash p \ivee \neg p
        &\iff   \model{M}, s \vDash p  \;\text{or}\;  \model{M}, s \vDash \neg p  \\
        &\iff  \left( \forall w \in s.\; V_w(p) = 1 \right)  \;\text{or}\;  \left( \forall w \in s.\; V_w(p) = 0 \right)
\end{array}
\end{equation*}




\noindent
So a state $s$ supports the formula $p \ivee \neg p$ if \emph{either} all the worlds in $s$ agree on $p$ being true \emph{or} all the worlds in $s$ agree on $p$ being false.
That is, if we have enough information to \emph{either} affirm that $p$ holds \emph{or} affirm that $p$ does not hold.

We conclude the Section with a proposition, exhibiting two fundamental properties of the semantics.

\begin{proposition}\label{proposition:empty and persistency}
    Let $\phi$ be a formula of the language and $\model{M}$ be a model.
    \begin{description}
        \item[(Empty state)] $\model{M}, \emptyset \vDash \phi$.
        \item[(Persistency)] For every $s \subseteq t \subseteq W$, if $\model{M}, s \vDash \phi$ then $\model{M}, t \vDash \phi$.
    \end{description}    
\end{proposition}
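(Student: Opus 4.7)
The plan is to prove both clauses by structural induction on $\phi$, treating them separately but noting that the Empty state clause is in a sense the ``degenerate case'' of the Persistency clause and will guide the analysis.

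For (Empty state), the base cases are immediate from the semantic clauses: $\model{M}, \emptyset \vDash \bot$ holds since $\emptyset = \emptyset$, and $\model{M}, \emptyset \vDash p$ holds since $\emptyset \subseteq V(p)$ vacuously. The inductive steps for $\land$ and $\ivee$ reduce directly to the inductive hypothesis applied to each subformula. The step for $\to$ exploits the fact that the only subset of $\emptyset$ is $\emptyset$ itself, so the universal quantification in the semantic clause is instantiated only at $\emptyset$, at which the inductive hypothesis on the consequent applies. The modal cases $\Box\phi$ and $\window\phi$ are vacuous: the quantifications range respectively over $w \in \emptyset$ and $t \in \Sigma[\emptyset]$, both of which are empty collections, so the clauses are satisfied trivially.

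For (Persistency), I would fix $s \subseteq t \subseteq W$ with $\model{M}, s \vDash \phi$ and induct on $\phi$. The induction would proceed by transiting each semantic clause and invoking the inductive hypothesis on subformulas: for $\land$ and $\ivee$ the conclusion would follow by a direct application of the inductive hypothesis to each conjunct or to the chosen disjunct; for $\to$, one would need to argue about subsets of $t$ in relation to subsets of $s$; for $\Box\phi$ and $\window\phi$, one would need to understand how $\bigcup \Sigma(w)$ (for $w \in t \setminus s$) and $\Sigma[t]$ relate to the ranges of quantification used in the hypothesis.

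The main obstacle I anticipate lies already at the base cases. The atomic clause $s \vDash p \iff s \subseteq V(p)$ is not preserved under enlarging the state: from $s \subseteq V(p)$ and $s \subseteq t$ one cannot in general infer $t \subseteq V(p)$, since $t$ may contain worlds outside $V(p)$. The analogous difficulty arises for $\bot$, which requires the state to be empty. The direction of inclusion asserted in the hypothesis therefore runs opposite to the monotonicity exhibited by the atomic semantic clauses, and I expect the inductive cases for $\to$, $\Box$, and $\window$ to compound the same tension. Resolving this mismatch, either by strengthening the hypothesis or by reversing the inclusion, will be the crux of the argument.
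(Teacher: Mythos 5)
Your treatment of the \textbf{Empty state} clause is correct and complete: the induction goes through exactly as you describe, with the implication case handled by noting that $\emptyset$ is its own only subset and the two modal cases vacuous. You are also right to flag that the \textbf{Persistency} clause \emph{as literally stated} is false: taking $\phi = p$, $s = \emptyset$ and $t$ any state containing a world outside $V(p)$ gives $\model{M}, s \vDash p$ but $\model{M}, t \nvDash p$. The statement contains a typo; the intended property is \emph{downward} persistency, i.e.\ for $s \subseteq t$, if $\model{M}, t \vDash \phi$ then $\model{M}, s \vDash \phi$. This reading is confirmed by how the proposition is used later in the paper --- in the proof of Lemma \ref{lemma:translationWorks} persistency is invoked to pass from the \emph{maximal} substates satisfying $D_{k-1}$ down to all substates satisfying it --- and it is consonant with the downward-closure condition imposed on $\Sigma(w)$ in Definition \ref{definition:information model}.

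The genuine gap is that you stop at diagnosing the mismatch and never prove the corrected statement. Once the inclusion is reversed the induction is routine and you should carry it out: for $\bot$, $t = \emptyset$ forces $s = \emptyset$; for atoms, $s \subseteq t \subseteq V(p)$ gives $s \subseteq V(p)$; the cases $\land$ and $\ivee$ follow by applying the inductive hypothesis to the subformulas; for $\to$, every subset of $s$ is a subset of $t$, so the universally quantified condition over subsets of $t$ subsumes the one over subsets of $s$; for $\Box\phi$, every $w \in s$ lies in $t$, so the condition at each such $w$ is already guaranteed; and for $\window\phi$, $\Sigma[s] \subseteq \Sigma[t]$, so the quantification over $\Sigma[t]$ again subsumes that over $\Sigma[s]$. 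Note that in the corrected formulation the cases $\to$, $\Box$ and $\window$ do not even require the inductive hypothesis --- the quantifier ranges simply shrink --- which is precisely the opposite of the ``compounding tension'' you anticipated for the uncorrected direction.
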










\section{Encoding and model checking algorithm}\label{section:algorithm}

\noindent
Henceforth, we limit ourselves to work with finite sets of atomic propositions and with finite models (i.e., with models with finitely many worlds).
Throughout the section, we will use the following ad-hoc notations for ease of presentation:
\begin{itemize}
    \item The vocabulary consists of the atomic propositions $\APset = \{ p_0, \dots, p_{l-1} \}$, thus $l$ is the number of atomic propositions;
    \item The set of worlds is $W = \{ w_0, \dots, w_{n-1} \}$, thus $n$ is the number of worlds in the model;
    \item The state map is $\Sigma(w_i) = \{ S_0^i, \dots, S_{k_i-1}^i \}$ for every $w_i \in W$;
    \item We define $m$ as the ``size'' of the function $\Sigma$, that is, the value
    \begin{equation*}
        m   \;=\;  |\Sigma(w_0)| + \dots + |\Sigma(w_{n-1})|
            \;=\;  k_0 + \dots + k_{n-1}
    \end{equation*}
\end{itemize}



\noindent
Consider an information model.
We define a natural way to encode this model by storing the propositional valuation in $n+1$ binary strings as follows:
\begin{itemize}
    \item The first string is $\delta$ of length $nl$.
    The $(li+j)^\th$ bit of the string is $\delta_{li+j} = V_{w_i}(p_j)$.
    This representation allows to check conditions of the form $w_i \in V(p_j)$ in time linear in $nl$.
    \item The other strings are $\epsilon_{0},\dots,\epsilon_{n-1}$.
    The string $\epsilon_k$ has length $(n+1)k_i+1$ and contains in order a digit $\mathtt{0}$, a binary representation of length $n$ of $S^i_0$, a digit $\mathtt{0}$, a binary representation of length $n$ of $S^i_1$\dots and ends with a digit $\mathtt{1}$.
    This representation allows to check conditions of the form $w_i \in \bigcup\Sigma[s]$ in time linear in $mn$.
\end{itemize}
An example of encoding is given in Figure \ref{fig:encoding}.
Notice that to encode an information model for $\inqB$ we do not need to store the strings $\epsilon_0,\dots,\epsilon_{n-1}$, but just the string $\delta$.

\begin{figure}
    \centering
        



    \begin{equation*}
    \begin{array}{c}
        \delta\hspace{1em}
            \underbrace{
                \begin{array}{ccc}
                    \mathtt{1}
                        &\mathtt{1}
                        &\mathtt{0}  \\
                    p_0
                        &p_1
                        &p_2
                \end{array}
            }_{w_0}
            \underbrace{
                \begin{array}{ccc}
                    \mathtt{0}
                        &\mathtt{1}
                        &\mathtt{0}  \\
                    p_0
                        &p_1
                        &p_2
                \end{array}
            }_{w_1}
            \underbrace{
                \begin{array}{ccc}
                    \mathtt{1}
                        &\mathtt{0}
                        &\mathtt{0}  \\
                    p_0
                        &p_1
                        &p_2
                \end{array}
            }_{w_2}  \\[3em]
        \epsilon_0\hspace{1em}
            \mathtt{0}\;
            \underbrace{
                \mathtt{0}
                \mathtt{0}
                \mathtt{1}
            }_{ \{w_2\} }\;
            \mathtt{1}
        \hspace{3em}
        \epsilon_1\hspace{1em}
            \mathtt{0}\;
            \underbrace{
                \mathtt{1}
                \mathtt{0}
                \mathtt{0}
            }_{ \{w_0\} }\;\,
            \mathtt{0}
            \underbrace{
                \mathtt{0}
                \mathtt{1}
                \mathtt{1}
            }_{ \{w_1,w_2\} }
            \mathtt{1}
        \hspace{3em}
        \epsilon_2\hspace{1em}
            \mathtt{0}
            \underbrace{
                \mathtt{1}
                \mathtt{1}
                \mathtt{0}
            }_{ \{w_0,w_1\} }
            \mathtt{0}
            \underbrace{
                \mathtt{1}
                \mathtt{0}
                \mathtt{1}
            }_{ \{w_0,w_2\} }
            \mathtt{1}
    \end{array}
    \end{equation*}
    \caption{The encoding of the information model depicted in Figure \ref{fig:inqModel}.
    Each bit of the encoding correspond to the valuation of a propositional atom at a certain world.
    For example, the bit $\mathtt{1}$ at position $0$ indicates that $w_0 \in V(p_0)$, while the bit $\mathtt{0}$ at position $2$ indicates that $w_0 \notin V(p_2)$.
    }
    \label{fig:encoding}
\end{figure}

We use a similar representation for \emph{information states}, encoded as strings of length $n$---where a $1$ in position $i$ indicates that the world $w_i$ is part of the state.
So, for example, the state $\{ w_0, w_2 \}$ in a model with worlds $\{w_0, w_1, w_2 \}$ is encoded by the binary string $\mathtt{101}$.
Given these encodings, conditions of the form $\model{M}, s \vDash p_j$, $w \in \bigcup \Sigma[s]$ and $t \in \Sigma[s]$ can be checked in time $\bigO((n(l+m))^2)$.\footnote{This and others bounds we indicate in this section are far from optimal, but this will not affect our analysis of the complexity of the model checking problem.}


With a slight abuse of notation, we use the symbols $\model{M}$, $s$ and $\phi$ to indicate both the formal objects introduced and their binary encodings.
We define the \emph{model checking problem for $\inqM$} as the decision problem
\begin{equation*}
    \MC(\inqM) \;:=\; \{\;  \tuple{\model{M}, s, \phi} \;|\; \text{$\model{M}$ is a model for $\inqM$ and }  \model{M}, s \vDash \phi  \;\}
\end{equation*}

\noindent
We also introduce the corresponding anti-satisfaction problem
\begin{equation*}
    \MCanti(\inqM) \;:=\; \{\;  \tuple{\model{M}, s, \phi} \;|\; \text{$\model{M}$ is a model for $\inqM$ and }  \model{M}, s \nvDash \phi  \;\}
\end{equation*}

\noindent
Similarly, we introduce the corresponding problems for $\inqB$.
\begin{gather*}
    \MC(\inqB) \;:=\; \{\;  \tuple{\model{M}, s, \phi} \;|\; \text{$\model{M}$ is a model for $\inqB$ and }  \model{M}, s \vDash \phi  \;\}  \\
    \MCanti(\inqB) \;:=\; \{\;  \tuple{\model{M}, s, \phi} \;|\; \text{$\model{M}$ is a model for $\inqB$ and }  \model{M}, s \nvDash \phi  \;\}
\end{gather*}

\noindent
We are interested in studying the complexity of these decision problems, and in particular we claim that all the problems have complexity $\PSPACE$.
Notice that the problem $\MC(\inqM)$ (resp., $\MCanti(\inqM)$) is strictly more complex than the problem $\MC(\inqB)$ (resp., $\MCanti(\inqB)$), so for our claim we just need to prove that the former is a $\PSPACE$ problem and the latter is a $\PSPACE$-hard problem.

Firstly, we present two mutually recursive algorithms for alternating Turing machines (ATMs) (see, e.g., \cite[Ch.~10]{Sipser:13}) to solve $\MC(\inqM)$ and $\MCanti(\inqM)$:
$\PosSem$, which checks the condition $\model{M}, s \vDash \phi$, and $\NegSem$, which checks the condition $\model{M}, s \nvDash \phi$.
These algorithms are a variation of the algorithms presented in \cite[Ch.~7]{Yang:14} and \cite{Zeuner:20}.


For brevity we will present the algorithms in pseudo-code.
We make use of the following expressions:
\begin{itemize}
    \item As usual, we indicate with \texttt{true}, \texttt{false} the Boolean values, and with \texttt{and}, \texttt{or}, \texttt{neg} the standard Boolean operators.

    \item Given a state $s$, we indicate with $\forall t \subseteq s$ that we are employing the universal states of the ATM to choose a subset $t \subseteq s$.
    In particular, the result yielded by the algorithm (relative to this non-deterministic branch of the computation) is \texttt{true} iff for all possible choices of $t$ the algorithm yields \texttt{true}.

    \item Given a state $s$, we indicate with $\exists t \subseteq s$ that we are employing the existential states of the ATM to choose a subset $t \subseteq s$.
    In particular, the result yielded by the algorithm (relative to this non-deterministic branch of the computation) is \texttt{true} iff for at least one choice of $t$ the algorithm yields \texttt{true}.
\end{itemize}

\noindent
Both operations can be implemented in an ATM by choosing non-deterministically the bits of the representation of $t$.
This amounts to $|s|$ non-deterministic choices, thus both operations have complexity in $\bigO(n)$.

\noindent
\begin{minipage}{\linewidth}
\begin{lstlisting}[
    escapeinside={(*}{*)},
    basicstyle=\footnotesize\ttfamily,
    % basicstyle=\scriptsize,
    breaklines=true,
    postbreak=\mbox{$\hookrightarrow$\space},
    numbers=left,
    xleftmargin=2em,
    framexleftmargin=1.5em
]
Algorithm PosSem
Input:  (*$\model{M}$*),(*$s$*),(*$\phi$*).
Output: true if (*$[\model{M},s \vDash \phi]$*), false otherwise.

Match (*$\phi$*) with
    (*$\phi = p_j$*):
        define result := true
        foreach i in {0,...,n-1}:
            if (*$w_i \in s$*) and (*$V_{w_i}(p_j) = 0$*) then result := false
        return result
    (*$\phi = \psi \land \chi$*):
        define result := PosSem((*$\model{M}$*),(*$s$*),(*$\psi$*)) and PosSem((*$\model{M}$*),(*$s$*),(*$\chi$*))
        return result
    (*$\phi = \psi \ivee \chi$*):
        define result := PosSem((*$\model{M}$*),(*$s$*),(*$\psi$*)) or PosSem((*$\model{M}$*),(*$s$*),(*$\chi$*))
        return result
    (*$\phi = \psi \to \chi$*):
        (*$\forall t \subseteq s$*)
        define result := NegSem((*$\model{M}$*),(*$t$*),(*$\psi$*)) or PosSem((*$\model{M}$*),(*$t$*),(*$\chi$*))
        return result
    (*$\phi = \Box \psi$*):
        define result := true
        foreach i in {0,...,n-1}:
            if (*$w_i \in s$*) then:
                define (*$t$*) := (*$\bigcup\Sigma(w_i)$*)
                if NegSem((*$\model{M}$*),(*$t$*),(*$\psi$*)) then result := false
        return result
    (*$\phi = \window \psi$*):
        define result := true
        foreach i in {0,...,n-1}:
            if (*$w_i \in s$*) then foreach (*$t \in \Sigma(w)$*):
                if NegSem((*$\model{M}$*),(*$t$*),(*$\psi$*)) then result := false
        return result
\end{lstlisting}
\end{minipage}

\noindent
\begin{minipage}{\linewidth}
\begin{lstlisting}[
    escapeinside={(*}{*)},
    basicstyle=\footnotesize\ttfamily,
    % basicstyle=\scriptsize,
    breaklines=true,
    postbreak=\mbox{$\hookrightarrow$\space},
    numbers=left,
    xleftmargin=2em,
    framexleftmargin=1.5em
]
Algorithm NegSem
Input:  (*$\model{M}$*),(*$s$*),(*$\phi$*).
Output: true if (*$[\model{M},s \nvDash \phi]$*), true otherwise.

Match (*$\phi$*) with
    (*$\phi = p_j$*):
        return neg(PosSem((*$\model{M}, s, p_j$*)))
    (*$\phi = \psi \land \chi$*):
        define result := NegSem((*$\model{M}$*),(*$s$*),(*$\psi$*)) or NegSem((*$\model{M}$*),(*$s$*),(*$\chi$*))
        return result
    (*$\phi = \psi \ivee \chi$*):
        define result := NegSem((*$\model{M}$*),(*$s$*),(*$\psi$*)) and NegSem((*$\model{M}$*),(*$s$*),(*$\chi$*))
        return result
    (*$\phi = \psi \to \chi$*):
        (*$\exists t \subseteq s$*)
        define result := PosSem((*$\model{M}$*),(*$t$*),(*$\psi$*)) and NegSem((*$\model{M}$*),(*$t$*),(*$\chi$*))
        return result
    (*$\phi = \Box \psi$*):
        define result := false
        foreach i in {0,...,n-1}:
            if (*$w_i \in s$*) then:
                define (*$t$*) := (*$\bigcup\Sigma(w_i)$*)
                if NegSem((*$\model{M}$*),(*$t$*),(*$\psi$*)) then result := true
        return result
    (*$\phi = \window \psi$*):
        define result := false
        foreach i in {0,...,n-1}:
            if (*$w_i \in s$*) then foreach (*$t \in \Sigma(w)$*):
                if NegSem((*$\model{M}$*),(*$t$*),(*$\psi$*)) then result := true
        return result
\end{lstlisting}
\end{minipage}

\noindent
Both algorithms work as intended, as proved in the following proposition.

\begin{proposition}
    Let $\model{M}$ be an information model, $s$ an information state of $\model{M}$ and $\phi$ an inquisitive formula.
    \begin{itemize}
        \item $\PosSem$ with inputs $\model{M},s,\phi$ returns as output \texttt{true} iff $\model{M},s \vDash \phi$.
        \item $\NegSem$ with inputs $\model{M},s,\phi$ returns as output \texttt{true} iff $\model{M},s \nvDash \phi$.
    \end{itemize}
\end{proposition}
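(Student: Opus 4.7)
The plan is to prove both halves of the proposition jointly, by induction on the structure of the formula $\phi$. A simultaneous induction is forced on us because the algorithms $\PosSem$ and $\NegSem$ are mutually recursive: on a strict subformula each one may invoke the other, so the inductive hypothesis must already guarantee correctness of both procedures on all smaller formulas. For every syntactic case of $\phi$ I would verify two items: (i) $\PosSem(\model{M}, s, \phi)$ outputs \texttt{true} iff $\model{M}, s \vDash \phi$, and (ii) $\NegSem(\model{M}, s, \phi)$ outputs \texttt{true} iff $\model{M}, s \nvDash \phi$.

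For the atomic base case $\phi = p_j$, the loop in $\PosSem$ sets the result to \texttt{false} precisely when some $w_i \in s$ has $V_{w_i}(p_j) = 0$, which matches the semantic clause $s \subseteq V(p_j)$; the $\NegSem$ branch is defined as the Boolean negation of this test, so item (ii) follows directly. The propositional cases $\psi \land \chi$ and $\psi \ivee \chi$ transcribe the semantic clauses literally and follow by applying the inductive hypothesis to $\psi$ and $\chi$ at the same state $s$. For the modal cases $\Box \psi$ and $\window \psi$, the algorithms perform a deterministic scan over the relevant worlds (resp.\ over the states in $\Sigma(w_i)$) and call $\NegSem$ on each argument: $\PosSem$ returns \texttt{true} iff no such call reports a witness of non-support, which by the inductive hypothesis amounts to $\model{M}, \bigcup \Sigma(w) \vDash \psi$ for every $w \in s$ (respectively, $\model{M}, t \vDash \psi$ for every $t \in \Sigma[s]$). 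The $\NegSem$ clauses dualize this, setting the result to \texttt{true} as soon as such a witness is found.

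The main obstacle is the implicational case $\phi = \psi \to \chi$, because here the semantic clause quantifies over all substates $t \subseteq s$ and this quantification must be realized through the alternation of the underlying ATM. In $\PosSem$, the universal choice $\forall t \subseteq s$ accepts the current branch iff either $\NegSem(\model{M}, t, \psi)$ or $\PosSem(\model{M}, t, \chi)$ returns \texttt{true}; the inductive hypothesis then gives $\model{M}, t \nvDash \psi$ or $\model{M}, t \vDash \chi$, and universally quantifying over $t$ recovers precisely the support clause for $\to$. The $\NegSem$ case is the exact dual via existential branching, exhibiting some $t \subseteq s$ with $\model{M}, t \vDash \psi$ and $\model{M}, t \nvDash \chi$. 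The care required here is to track polarities correctly---every recursive call must swap $\PosSem$ and $\NegSem$ so that the dualities $\vDash$/$\nvDash$ and $\forall$/$\exists$ align with the ATM's universal/existential branching---but once this bookkeeping is fixed, each case is a direct unfolding of the corresponding semantic clause.
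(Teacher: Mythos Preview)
Your proposal is correct and follows essentially the same approach as the paper: a simultaneous structural induction on $\phi$, verifying for each syntactic constructor that the algorithm's branch mirrors the corresponding semantic clause, with the inductive hypothesis applied to the recursive calls on strict subformulas. The paper likewise singles out the atomic, implicational, and modal cases as the ones requiring explicit comment, and handles the $\to$ case exactly as you describe, by unfolding the universal/existential ATM branching into the support/non-support conditions.
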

\begin{proof}
    The proof proceeds by induction on the structure of the formula $\phi$.
    The cases for conjunction and disjunction are a direct translation of the semantic clauses (or their contrapositive conditions), so we comment only the cases for atomic proposition, implication and modalities.
    \begin{itemize}
        \item \textbf{Case $\phi = p_j$ in $\PosSem$:} The pseudo-code for this case consists of the following lines:
        \begin{lstlisting}[
            escapeinside={(*}{*)},
            basicstyle=\footnotesize\ttfamily,
            % basicstyle=\scriptsize,
            breaklines=true,
            postbreak=\mbox{$\hookrightarrow$\space},
            numbers=left,
            xleftmargin=2em,
            framexleftmargin=1.5em,
            firstnumber=7
        ]
define result := true
foreach i in {0,...,n-1}:
    if (*$w_i \in s$*) and (*$V_{w_i}(p_j) = 0$*) then result := false
return result
        \end{lstlisting}
        \noindent
        The variable \texttt{result} (the output of this portion of code) is initially assigned value \texttt{true} (line 7) and this value may be updated to \texttt{false} in the \texttt{foreach}-loop (lines 8-9).

        Firstly assume that $\model{M}, s \vDash p_j$, i.e., for every $w_i \in s$ we have $V_{w_i}(p_j) = 1$.
        In this case, the condition of the \texttt{if-then} expression at line 17 (\texttt{if} $w_i \in s$ \texttt{and} $V_{w_i}(p_j) = 0$) is never satisfied, thus the variable \texttt{result} maintains the value \texttt{true}, and the algorithm returns the correct value \texttt{true} (line 10).

        Secondly assume that $\model{M}, s \nvDash p_j$, i.e., there exists $w_i \in s$ for which $V_{w_i}(p_j) = 0$.
        In this case, the condition of the \texttt{if-then} expression at line 17 is satisfied for $w_i$, thus the variable \texttt{result} is assigned value \texttt{false}, and the algorithm returns the correct value \texttt{false} (line 10).

        \item \textbf{Case $\phi = p_j$ in $\NegSem$:} The pseudo-code for this case consists of the following lines:
        \begin{lstlisting}[
            escapeinside={(*}{*)},
            basicstyle=\footnotesize\ttfamily,
            % basicstyle=\scriptsize,
            breaklines=true,
            postbreak=\mbox{$\hookrightarrow$\space},
            numbers=left,
            xleftmargin=2em,
            framexleftmargin=1.5em,
            firstnumber=7
        ]
return neg(PosSem((*$\model{M}, s, p_j$*)))
        \end{lstlisting}
        \noindent
        In this case the algorithm returns the Boolean negation of the output of $\mathtt{\PosSem(}\model{M},s,p_j\mathtt{)}$.
        That is, the algorithm returns \texttt{true} if $\model{M},s\nvDash p_j$ and it returns \texttt{false} if $\model{M},s\vDash p_j$, which is the expected behavior.

        \item \textbf{Case $\phi = \psi \to \chi$ in $\PosSem$:} The pseudo-code for this case consists of the following lines:
        \begin{lstlisting}[
            escapeinside={(*}{*)},
            basicstyle=\footnotesize\ttfamily,
            % basicstyle=\scriptsize,
            breaklines=true,
            postbreak=\mbox{$\hookrightarrow$\space},
            numbers=left,
            xleftmargin=2em,
            framexleftmargin=1.5em,
            firstnumber=15
        ]
(*$\forall t\subseteq s$*)
define result := NegSem((*$\model{M}$*),(*$t$*),(*$\psi$*)) or PosSem((*$\model{M}$*),(*$t$*),(*$\chi$*))
return result
        \end{lstlisting}

        \noindent
        Translating the algorithm in mathematical expressions and applying the inductive hypothesis, this piece of code returns the Boolean value corresponding to the following expression:
        \begin{equation*}
            \forall t \subseteq s.\; \left[\;  \model{M},t \nvDash \psi  \;\text{or}\;  \model{M},t \vDash \chi   \;\right],
        \end{equation*}
        which is exactly the semantic clause for $\model{M},s\vDash \psi \to \chi$.

        \item \textbf{Case $\phi = \psi \to \chi$ in $\NegSem$:} The pseudo-code for this case consists of the following lines:
        \begin{lstlisting}[
            escapeinside={(*}{*)},
            basicstyle=\footnotesize\ttfamily,
            % basicstyle=\scriptsize,
            breaklines=true,
            postbreak=\mbox{$\hookrightarrow$\space},
            numbers=left,
            xleftmargin=2em,
            framexleftmargin=1.5em,
            firstnumber=15
        ]
(*$\exists t\subseteq s$*)
define result := PosSem((*$\model{M}$*),(*$t$*),(*$\psi$*)) and NegSem((*$\model{M}$*),(*$t$*),(*$\chi$*))
return result
        \end{lstlisting}

        \noindent
        Translating the algorithm in mathematical expressions and applying the inductive hypothesis, this piece of code returns the Boolean value corresponding to the following expression:
        \begin{equation*}
            \exists t \subseteq s.\; \left[\;  \model{M},t\vDash \psi  \;\text{and}\;  \model{M},t\nvDash \chi  \;\right] 
        \end{equation*}
        which is equivalent to the following expression, that is, the semantic condition corresponding to $\model{M},s\nvDash \psi \to \chi$:
        \begin{equation*}
            \neg\big(\;\;  \forall t \subseteq s.\;  \left[\;  \model{M},t\nvDash \psi  \;\text{or}\;  \model{M},t\vDash \chi  \;\right]  \;\;\big)
        \end{equation*}

        \item \textbf{Case $\phi = \Box \psi$ in $\PosSem$:} The pseudo-code for this case consists of the following lines:
        \begin{lstlisting}[
            escapeinside={(*}{*)},
            basicstyle=\footnotesize\ttfamily,
            % basicstyle=\scriptsize,
            breaklines=true,
            postbreak=\mbox{$\hookrightarrow$\space},
            numbers=left,
            xleftmargin=2em,
            framexleftmargin=1.5em,
            firstnumber=22
        ]
define result := true
foreach i in {0,...,n-1}:
    if (*$w_i \in s$*) then:
        define (*$t$*) := (*$\bigcup\Sigma(w_i)$*)
        if NegSem((*$\model{M}$*),(*$t$*),(*$\psi$*)) then result := false
return result
        \end{lstlisting}

        \noindent
        The variable $\mathtt{result}$ (the output of this portion of code) is initially assigned value $\mathtt{true}$ (line 22) and the value is updated to $\mathtt{false}$ in the $\mathtt{foreach}$-loop only if there exists $w_i \in s$ such that $\mathtt{NegSem(}\model{M}\mathtt{,}\bigcup\Sigma(w_i)\mathtt{,}\psi\mathtt{)}$ (lines 23-26).
        Applying the inductive hypothesis, the piece of code returns the Boolean value corresponding to the following expression:
        \begin{equation*}
            \forall w_i \in s.\; \model{M}, \bigcup\Sigma(w_i) \vDash \psi
        \end{equation*}
        which is the semantic condition corresponding to $\model{M}, s \vDash \Box\psi$.

        \item \textbf{Case $\phi = \Box \psi$ in $\NegSem$:} The pseudo-code for this case consists of the following lines:
        \begin{lstlisting}[
            escapeinside={(*}{*)},
            basicstyle=\footnotesize\ttfamily,
            % basicstyle=\scriptsize,
            breaklines=true,
            postbreak=\mbox{$\hookrightarrow$\space},
            numbers=left,
            xleftmargin=2em,
            framexleftmargin=1.5em,
            firstnumber=19
        ]
define result := true
foreach i in {0,...,n-1}:
    if (*$w_i \in s$*) then:
        define (*$t$*) := (*$\bigcup\Sigma(w_i)$*)
        if NegSem((*$\model{M}$*),(*$t$*),(*$\psi$*)) then result := false
return result
        \end{lstlisting}

        \noindent
        This portion of code is exactly the same as the portion for $\Box\psi$ in the algorithm $\mathtt{PosSem}$ (lines 22-27), but with the values assigned to the variable $\mathtt{result}$ reversed.
        Thus applying an analogous reasoning as in the previous point, the piece of code returns the Boolean value corresponding to the following expression:
        \begin{equation*}
            \exists w_i \in s.\; \model{M}, \bigcup\Sigma(w_i) \nvDash \psi
        \end{equation*}
        which is the semantic condition corresponding to $\model{M}, s \nvDash \Box\psi$.

        \item \textbf{Case $\phi = \window \psi$ in $\PosSem$:} The pseudo-code for this case consists of the following lines:
        \begin{lstlisting}[
            escapeinside={(*}{*)},
            basicstyle=\footnotesize\ttfamily,
            % basicstyle=\scriptsize,
            breaklines=true,
            postbreak=\mbox{$\hookrightarrow$\space},
            numbers=left,
            xleftmargin=2em,
            framexleftmargin=1.5em,
            firstnumber=29
        ]
define result := true
foreach i in {0,...,n-1}:
    if (*$w_i \in s$*) then foreach (*$t \in \Sigma(w)$*):
        if NegSem((*$\model{M}$*),(*$t$*),(*$\psi$*)) then result := false
return result
        \end{lstlisting}

        \noindent
        The variable \texttt{result} (the output of this portion of code) is initially assigned value \texttt{true} (line 29) and this value may be updated to \texttt{false} in the \texttt{foreach}-loop in case $\mathtt{NegSem(}\model{M}\mathtt{,}t\mathtt{,}\psi\mathtt{)}$ holds for some $t \in \bigcup\Sigma[s]$ (lines 30-32).
        Translating this condition in mathematical formulas, this corresponds to the following expression:
        \begin{equation*}
            \forall t \in \Sigma[s].\;   \model{M}, t \vDash \psi
        \end{equation*}
        which is the semantic condition corresponding to $\model{M}, s \vDash \window \psi$.

        \item \textbf{Case $\phi = \window \psi$ in $\NegSem$:} The pseudo-code for this case consists of the following lines:
        \begin{lstlisting}[
            escapeinside={(*}{*)},
            basicstyle=\footnotesize\ttfamily,
            % basicstyle=\scriptsize,
            breaklines=true,
            postbreak=\mbox{$\hookrightarrow$\space},
            numbers=left,
            xleftmargin=2em,
            framexleftmargin=1.5em,
            firstnumber=26
        ]
define result := true
foreach i in {0,...,n-1}:
    if (*$w_i \in s$*) then:
        define (*$t$*) := (*$\bigcup\Sigma(w_i)$*)
        if NegSem((*$\model{M}$*),(*$t$*),(*$\psi$*)) then result := false
return result
        \end{lstlisting}

        \noindent
        This portion of code is exactly the same as the portion for $\window\psi$ in the algorithm $\mathtt{PosSem}$ (lines 29-33), but with the values assigned to the variable $\mathtt{result}$ reversed.
        Thus applying an analogous reasoning as in the previous point, the piece of code returns the Boolean value corresponding to the following expression:
        \begin{equation*}
            \exists t\in \Sigma[s].\; \model{M},t\nvDash \psi
        \end{equation*}
        which is the semantic condition corresponding to $\model{M}, s \nvDash \window\psi$.

    \end{itemize}
\end{proof}

\noindent
We leave to the eager reader the task to show that both algorithms have complexity $\bigO((n(l+m))^2 \cdot \size{\phi})$, where $\size{\phi}$ denotes the size of the encoding of the formula $\phi$.
Since the input consists of an encoding of the model together with an encoding of the formula (plus overhead), the size of the input is linear in $t = n(l+m) + \size{\phi}$.
So the previous analysis shows that the problems $\MC(\inqM)$ and $\MCanti(\inqM)$ both lie in $\ATIME(t^3) \subseteq \PSPACE$.\footnote{Recall that $\PSPACE = \AP := \bigcup_{i \in \N} \ATIME(t^i)$ (see \cite[Th.~10.21]{Sipser:13}).}

\begin{theorem}\label{theorem:mc inqm and mcanti inqm are pspace}
    The problems $\MC(\inqM)$ and $\MCanti(\inqM)$ are in $\PSPACE$.
\end{theorem}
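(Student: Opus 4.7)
The plan is to bound the running time of the mutually recursive ATM procedures $\PosSem$ and $\NegSem$ by a polynomial in the size of the input, and then invoke the identification $\PSPACE = \APclass = \bigcup_{i \in \N} \ATIME(t^i)$ recalled in the footnote.

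First I would quantify the cost of the primitive operations under the chosen encoding. Given the binary strings $\delta, \epsilon_0, \dots, \epsilon_{n-1}$ fixed at the start of the section, checks of the form $w_i \in s$, $V_{w_i}(p_j) = 0$, $w \in \bigcup\Sigma(w_i)$, and $t \in \Sigma(w_i)$ can each be performed in time $\bigO((n(l+m))^2)$ by scanning the relevant substring, as already observed. The non-deterministic guesses $\forall t \subseteq s$ and $\exists t \subseteq s$ amount to branching on at most $n$ bits of the representation of $t$, contributing only $\bigO(n)$ along any single branch.

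Next I would prove by structural induction on $\phi$ that, on every branch of the ATM computation, the running time of $\PosSem(\model{M},s,\phi)$ and of $\NegSem(\model{M},s,\phi)$ is bounded by $\bigO((n(l+m))^2 \cdot \size{\phi})$. The base cases $\phi = \bot, p_j$ reduce to a single linear scan over the worlds. The Boolean cases $\land$ and $\ivee$ contribute a constant-cost choice of an operand followed by one recursive call per branch. The implication case $\to$ uses alternation to delegate the choice of $t \subseteq s$ to the ATM, after which a single subformula is inspected recursively. The modal cases $\Box$ and $\window$ loop over at most $n$ worlds and at most $m$ associated states before recursing on $\psi$. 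Summing up, each branch carries out at most $\size{\phi}$ recursive layers, each adding overhead $\bigO((n(l+m))^2)$.

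Since the input $\tuple{\model{M},s,\phi}$ has size linear in $t := n(l+m) + \size{\phi}$, the per-branch running time is $\bigO(t^3)$. Hence $\PosSem$ and $\NegSem$ witness $\MC(\inqM), \MCanti(\inqM) \in \ATIME(t^3) \subseteq \APclass = \PSPACE$. The only delicate point is checking that the universal and existential branching of the ATM in the $\to$-clauses faithfully mirrors the intended semantic quantifier pattern; this is essentially what the correctness proposition already establishes, after which the polynomial bound follows by a routine induction I would not spell out in detail.
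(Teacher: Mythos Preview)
Your proposal is correct and follows exactly the paper's approach: bound the per-branch running time of the mutually recursive ATM procedures $\PosSem$ and $\NegSem$ by $\bigO((n(l+m))^2 \cdot \size{\phi})$ via structural induction on $\phi$, then use the input-size parameter $t = n(l+m) + \size{\phi}$ to conclude membership in $\ATIME(t^3) \subseteq \APclass = \PSPACE$. In fact, the paper explicitly leaves the complexity analysis ``to the eager reader'' and simply asserts the bound, so your sketch spells out precisely the argument the paper omits.
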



\section{Complexity of Model Checking for $\inqB$}\label{section:complexity}

\noindent
In this section we provide a polynomial reduction of the $\PSPACE$-complete problem $\TQBF$ (the set of \emph{true quantified Boolean formulas}) to $\MC(\inqB)$.
Combined with the results from the previous section, this shows that $\MC(\inqB)$ and $\MC(\inqM)$ are $\PSPACE$-complete problems.
Moreover, since the problem $\MC(\inqB)$ trivially reduces to $\MCanti(\inqB)$, this suffices to show that also $\MCanti(\inqB)$ and $\MCanti(\inqM)$ are also $\PSPACE$-complete.
Since we are working with the logic $\inqB$, we will work only with models for $\inqB$ (recall that we introduced the graphical representation of these models in Figure \ref{fig:inqModel}).

The section is divided in several parts.
Firstly we introduce the family of \emph{switching models} and show how to use them to encode Boolean valuations.\footnote{We remark here that we the terminology \emph{Boolean valuations} to indicate functions with codomain $\{0,1\}$, not to be confused with the valuation of an information model.}
Secondly, we introduce some special formulas, later used to encode an instance of the $\TQBF$ problem (i.e., a quantified Boolean sentence) into an instance of $\MC(\inqB)$.
These formulas allow to simulate the steps of a non-deterministic search to solve the $\TQBF$ problem.
Thirdly, we employ the formulas previously defined to define the encoding and show that it is a reduction of the $\TQBF$ problem into the $\MC(\inqB)$ problem.
Finally, we show that the encoding proposed is polinomially bounded in size, thus proving that it preserves $\PSPACE$-hardness.

\subsection{Switching models}
We introduce \emph{switching models}, a family of inquisitive models.
These models allows us to encode Boolean valuations as information states.
These models were originally employed in \cite{Ebbing:12} and later in \cite[Ch.~7]{Yang:14} to study the model checking problem for different team semantics.

\begin{definition}[Switching model] Given $l \in \N$ a positive natural number, we define the model $\SM_l = \tuple{W_l, V_l}$ over the set of propositional atoms $\{p_0,\dots,p_{l-1},q_0,\dots,q_{l-1}\}$ by the following clauses:
\begin{itemize}
    \item $W_l = \{ w_0^+, w_0^-, w_1^+, w_1^-, \dots, w_{l-1}^+, w_{l-1}^- \}$;
    \item $V_l(p_i) = \{ w_i^+ \}$;
    \item $V_l(q_i) = \{ w_i^+, w_i^- \}$.
\end{itemize}
\end{definition}

\begin{figure}
    \centering
    \begin{tikzpicture}
        
        \draw[info, green!50!black] (-.6,.6) node[above]{$p_1$} rectangle (.6,-.6);
        \draw[info, green!65!black] ( .9,.6) node[above]{$p_2$} rectangle (2.1,-.6);
        \draw[info, green!80!black] (3.9,.6) node[above]{$p_3$} rectangle (5.1,-.6);

        \draw[info, blue!100!white] (-.5,.5) rectangle ( .5,-1.5) node[below]{$q_1$};
        \draw[info,  blue!80!white] (  1,.5) rectangle (  2,-1.5) node[below]{$q_2$};
        \draw[info,  blue!60!white] (  4,.5) rectangle (  5,-1.5) node[below]{$q_3$};

        \node[world] (w1p) at (  0, 0) {$w_1^+$};
        \node[world] (w1m) at (  0,-1) {$w_1^-$};
        \node[world] (w2p) at (1.5, 0) {$w_2^+$};
        \node[world] (w2m) at (1.5,-1) {$w_2^-$};
        \node at (3,-.5) {$\cdots$};
        \node[world] (wlp) at (4.5, 0) {$w_l^+$};
        \node[world] (wlm) at (4.5,-1) {$w_l^-$};

    \end{tikzpicture}
    \caption{The switching model $\SM_l$.
    The circles represent the worlds of the model ($w_1^+,w_1^-,\dots$).
    The extension of the atomic propositions is depicted by the colored rectangles (in green the atoms $p_i$, in blue the atoms $q_i$).
    For example, the set $V_l(p_2) = \{ w_2^+ \}$ corresponds to the second (from the left) green rectangle.}
    \label{fig:exampleSwitchingModel}
\end{figure}
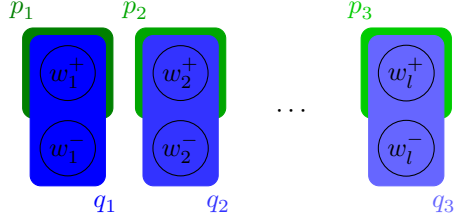

\noindent
A representation of a switching model is depicted in Figure \ref{fig:exampleSwitchingModel}.
For a fixed switching model $\SM_l$, we define special states to encode Boolean valuations over the set of atoms $\APset_k := \{x_0,\dots,x_{k-1}\}$ for $k\leq l$:
we will call these states \emph{$k$-switchings}.\footnote{Notice that we are using two distinct families of propositional atoms:
the $p_i$s and $q_i$s are the atoms evaluated on the switching model, while the $x_i$s (the elements of $\APset_k$) are the atoms used to define the Boolean valuations encode by information states.}

\begin{definition}\label{definition:switching}
    Consider a number $k \leq l$.
    A \emph{$k$-switching} of $\SM_l$ is an information state $s$ with the following two properties:
    \begin{itemize}
        \item For every $i < k$, $s$ contains exactly one world among $w_i^+$ and $w_i^-$;
        \item For every $i$ with $k \leq i < l$, $s$ contains both $w_i^+$ and $w_i^-$.
    \end{itemize}

\end{definition}

\noindent
Two examples of switchings are depicted in Figure \ref{fig:exampleSwitching}.
Notice that there exists only one $0$-switching, that is, the info state $W_l$.
There is a natural correspondence between $k$-switchings of $\SM_l$ and Boolean valuations over $\APset_k$.
Given a \emph{Boolean valuation} $\sigma: \APset_k \to \{0,1\}$, we define the corresponding $k$-switching $s_\sigma$ by the following clauses (for $i < k$):
\begin{equation*}
\begin{array}{lcl}
    w_i^+ \in s_\sigma  &\quad\iff\quad  &\sigma(x_i) = 1  \\
    w_i^- \in s_\sigma  &\quad\iff\quad  &\sigma(x_i) = 0
\end{array}
\end{equation*}

\noindent
This correspondence also holds for the particular case of $k=0$:
the set $\APset_0$ is empty, so there is only one valuation over this set (the empty function);
and the only $0$-switching of $\SM_l$ is $W_l$, the set of all worlds of the model.

Let us also highlight the following properties, which will come in handy to later define the encoding of a quantified Boolean formula.
For $i \leq k$ we have:
\begin{equation*}
\begin{array}{lcl}
    \sigma(x_i) = 1  &\quad\iff\quad  &s_\sigma \vDash q_i \to p_i  \\
    \sigma(x_i) = 0  &\quad\iff\quad  &s_\sigma \vDash q_i \to \neg p_i
\end{array}
\end{equation*}

\noindent
This map from Boolean valuations $\sigma$ to $k$-switchings $s_\sigma$ is bijective.
Its inverse $s \mapsto \sigma_s$ is defined by the following two equivalent conditions:
\begin{equation*}
\begin{array}{lcl}
    \sigma_s(x_i) = 1  &\quad\iff\quad  &w_i^+ \in s  \\
    \sigma_s(x_i) = 0  &\quad\iff\quad  &w_i^- \in s
\end{array}
\end{equation*}

\begin{figure}
    \centering
    \begin{tikzpicture}
        \draw[info, red!70!black] (-.5,.5) -- (-.5,-.5) -- (.5,-.5) -- (1,-1.5) -- (2,-1.5) -- (2.5,-.5) -- (5,-.5) -- (5,.5) -- (2.5,.5) -- (2,-.5) -- (1,-.5) -- (.5,.5) -- cycle;
        
        \node[world] (w1p) at (  0, 0) {$w_0^+$};
        \node[world] (w1m) at (  0,-1) {$w_0^-$};
        \node[world] (w2p) at (1.5, 0) {$w_1^+$};
        \node[world] (w2m) at (1.5,-1) {$w_1^-$};
        \node[world] (w3p) at (  3, 0) {$w_2^+$};
        \node[world] (w3m) at (  3,-1) {$w_2^-$};
        \node[world] (w4p) at (4.5, 0) {$w_3^+$};
        \node[world] (w4m) at (4.5,-1) {$w_3^-$};


    \end{tikzpicture}\hspace{2em}
    \begin{tikzpicture}
        \draw[info, red!70!black] (-.5,.5) -- (-.5,-.5) -- (.5,-.5) -- (1,-1.5) -- (5,-1.5) -- (5,.5) -- (2.5,.5) -- (2,-.5) -- (1,-.5) -- (.5,.5) -- cycle;
        
        \node[world] (w1p) at (  0, 0) {$w_0^+$};
        \node[world] (w1m) at (  0,-1) {$w_0^-$};
        \node[world] (w2p) at (1.5, 0) {$w_1^+$};
        \node[world] (w2m) at (1.5,-1) {$w_1^-$};
        \node[world] (w3p) at (  3, 0) {$w_2^+$};
        \node[world] (w3m) at (  3,-1) {$w_2^-$};
        \node[world] (w4p) at (4.5, 0) {$w_3^+$};
        \node[world] (w4m) at (4.5,-1) {$w_3^-$};


    \end{tikzpicture}
    \caption{A $4$-switching $s$ (on the left) and a $2$-switching $t$ (on the right) over the model $\SM_4$.
    $s$ corresponds to the Boolean valuation $\sigma: \APset_4 \to \{0,1\}$ defined as $\sigma(x_0) = 1$, $\sigma(x_1) = 0$, $\sigma(x_2) = 1$ and $\sigma(x_3) = 1$.
    $t$ corresponds to the Boolean valuation $\tau: \APset_2 \to \{0,1\}$ defined as $\tau(x_0) = 1$ and $\tau(x_1) = 0$.
    }
    \label{fig:exampleSwitching}
\end{figure}
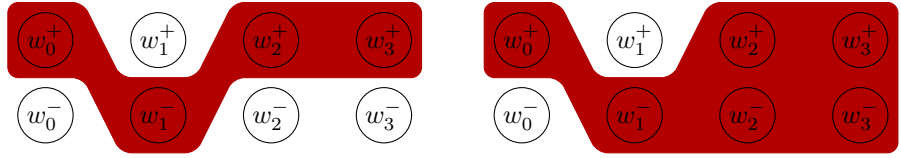

\subsection{Some special formulas}
\label{subsection:specialFormulas}
We now introduce several formulas that will help us in our endeavors.
Each of these formulas has a characteristic semantics when interpreted on the switching model $\SM_l$.

\begin{equation}
    C_k^+ := q_k \land p_k
    \quad\text{and}\quad
    C_k^- := q_k \land \neg p_k
    \qquad \text{for $0\leq k < l$}
\end{equation}
These two formulas characterize the singleton states $\{w_k^+\}$ and $\{w_k^-\}$ respectively.
A depiction of their semantics is given in Figure \ref{fig:cFormulas}.
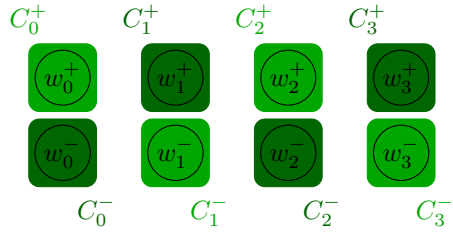
\begin{figure}
\centering
\begin{tikzpicture}
        \draw[info, green!65!black] (-.45, .45) node[above]{$C_0^+$} rectangle (.45,- .45);
        \draw[info, green!40!black] (-.45,-.55) rectangle (.45,-1.45) node[below]{$C_0^-$};
        
        \draw[info, green!40!black] (1.05, .45) node[above]{$C_1^+$} rectangle (1.95,- .45);
        \draw[info, green!65!black] (1.05,-.55) rectangle (1.95,-1.45) node[below]{$C_1^-$};

        \draw[info, green!65!black] (2.55, .45) node[above]{$C_2^+$} rectangle (3.45,- .45);
        \draw[info, green!40!black] (2.55,-.55) rectangle (3.45,-1.45) node[below]{$C_2^-$};

        \draw[info, green!40!black] (4.05, .45) node[above]{$C_3^+$} rectangle (4.95,- .45);
        \draw[info, green!65!black] (4.05,-.55) rectangle (4.95,-1.45) node[below]{$C_3^-$};
        
        \node[world] (w1p) at (  0, 0) {$w_0^+$};
        \node[world] (w1m) at (  0,-1) {$w_0^-$};
        \node[world] (w2p) at (1.5, 0) {$w_1^+$};
        \node[world] (w2m) at (1.5,-1) {$w_1^-$};
        \node[world] (w3p) at (  3, 0) {$w_2^+$};
        \node[world] (w3m) at (  3,-1) {$w_2^-$};
        \node[world] (w4p) at (4.5, 0) {$w_3^+$};
        \node[world] (w4m) at (4.5,-1) {$w_3^-$};

    \end{tikzpicture}
    \caption{Semantics of $C_k^+$ and $C_k^-$.}
    \label{fig:cFormulas}
\end{figure}

\begin{equation}
    D_k := q_k \to {\,?p_k}
    \qquad\text{for $0\leq k < l$}
\end{equation}
The maximal states satisfying $D_k$ are depicted in Figure \ref{fig:Dformula}.
We give a short proof that this is indeed their semantics.

\begin{lemma}\label{lemma:dFormula}
    Let $s$ be a state of $\SM_l$.
    Then the maximal substates of $s$ satisfying $D_k$ are $t^+ = s_\sigma \setminus\{ w_k^- \}$ and $t^- = s_\sigma \setminus \{ w_k^+ \}$.
\end{lemma}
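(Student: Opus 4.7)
The plan is to unfold the support clause for $D_k = q_k \to {?p_k}$ (where $?p_k := p_k \ivee \neg p_k$) and reduce it to a purely set-theoretic condition on the candidate substate. Concretely, I will argue that for any state $t \subseteq W_l$ one has $\SM_l, t \vDash D_k$ iff $\{w_k^+, w_k^-\} \not\subseteq t$. The description of the maximal substates of $s$ satisfying $D_k$ then falls out immediately.

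First I would expand the semantic clause for implication: $t \vDash q_k \to (p_k \ivee \neg p_k)$ holds iff for every $u \subseteq t$ with $u \vDash q_k$, either $u \vDash p_k$ or $u \vDash \neg p_k$. Since $V_l(q_k) = \{w_k^+, w_k^-\}$, the substates of $t$ supporting $q_k$ are exactly the subsets of $t \cap \{w_k^+, w_k^-\}$. Next, using $V_l(p_k) = \{w_k^+\}$, such a $u$ supports $p_k$ iff $w_k^- \notin u$ and supports $\neg p_k$ iff $w_k^+ \notin u$. Among $u \subseteq \{w_k^+, w_k^-\}$ the only one failing both conditions is $u = \{w_k^+, w_k^-\}$ itself; the three proper subsets ($\emptyset$, $\{w_k^+\}$, $\{w_k^-\}$) are dispatched at once by the empty-state and persistency properties from Proposition~\ref{proposition:empty and persistency}.

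Combining these observations I obtain $t \vDash D_k$ iff $\{w_k^+, w_k^-\} \not\subseteq t$. To close, I read off the maximal substates of $s$ satisfying this condition: whenever both $w_k^+$ and $w_k^-$ belong to $s$, the maximal substates are obtained by deleting exactly one of them, giving $s \setminus \{w_k^-\}$ and $s \setminus \{w_k^+\}$, which match the states $t^+$ and $t^-$ in the statement (I read $s_\sigma$ in the lemma as $s$, since the intended setting is that $s$ contains both $w_k^+$ and $w_k^-$, as happens in particular when $s$ is a $k'$-switching with $k' \leq k$). I do not foresee any real obstacle; the only point deserving care is the case analysis over the four subsets of $\{w_k^+, w_k^-\}$, which persistency makes trivial.
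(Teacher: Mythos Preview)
Your proposal is correct and follows essentially the same approach as the paper's proof: both reduce the support condition for $D_k$ to the set-theoretic characterization $t \vDash D_k \iff \{w_k^+, w_k^-\} \nsubseteq t$, and then read off the maximal substates. The paper states this more tersely (simply noting that $\{w_k^+, w_k^-\}$ is the unique state supporting $q_k$ but not $?p_k$), while you unpack the case analysis explicitly; your remark about reading $s_\sigma$ as $s$ is also appropriate, since the $s_\sigma$ in the lemma statement is evidently a typo for $s$.
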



\begin{proof}
    The only state of $\SM_l$ satisfying $q_k$ and not satisfying $?p_k$ is $\{w^+_k,w^-_k\}$.
    So by definition of $D_k$ for every $t \subseteq s$ we have that $\SM_l, t \vDash D_k$ iff $\{w^+_k,w^-_k\} \nsubseteq t$.
    In particular, $t^+$ and $t^-$ are the maximal substates of $s$ with these property.
\end{proof}


\begin{figure}
\centering
\begin{tikzpicture}
        \draw[info, red!70!black] (-.5,.5) -- (-.5,-1.5) -- (5,-1.5) -- (5,.5) -- (2.5,.5) -- (2,-.5) -- (1,-.5) -- (.5,.5) -- cycle;

        \node[world] (w1p) at (  0, 0) {$w_0^+$};
        \node[world] (w1m) at (  0,-1) {$w_0^-$};
        \node[world] (w2p) at (1.5, 0) {$w_1^+$};
        \node[world] (w2m) at (1.5,-1) {$w_1^-$};
        \node[world] (w3p) at (  3, 0) {$w_2^+$};
        \node[world] (w3m) at (  3,-1) {$w_2^-$};
        \node[world] (w4p) at (4.5, 0) {$w_3^+$};
        \node[world] (w4m) at (4.5,-1) {$w_3^-$};

        
    \end{tikzpicture}\hspace{2em}
    \begin{tikzpicture}
        \draw[info, blue!70!black] (-.5,.5) -- (-.5,-1.5) -- (.5,-1.5) -- (1,-.5) -- (2,-.5) -- (2.5,-1.5) -- (5,-1.5) -- (5,.5) -- cycle;
        
        \node[world] (w1p) at (  0, 0) {$w_0^+$};
        \node[world] (w1m) at (  0,-1) {$w_0^-$};
        \node[world] (w2p) at (1.5, 0) {$w_1^+$};
        \node[world] (w2m) at (1.5,-1) {$w_1^-$};
        \node[world] (w3p) at (  3, 0) {$w_2^+$};
        \node[world] (w3m) at (  3,-1) {$w_2^-$};
        \node[world] (w4p) at (4.5, 0) {$w_3^+$};
        \node[world] (w4m) at (4.5,-1) {$w_3^-$};


    \end{tikzpicture}
    \caption{The two maximal states satisfying $D_1$ in the model $\SM_4$.}
    \label{fig:Dformula}
\end{figure}
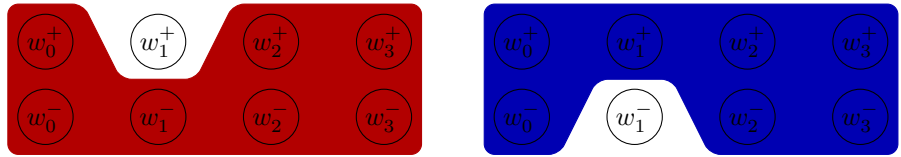

\begin{align}
    S_{0} &:= \bigivee_{i=0}^{l-1} \left( \neg C_i^+ \ivee \neg C_i^-  \right)  \\
    S_k &:= \bigivee_{i=0}^{k-1} \left( \neg C_i^+ \land \neg C_i^- \right) \;\ivee\; \bigivee_{i=k}^{l-1} \left( \neg C_i^+ \ivee \neg C_i^- \right)  \quad\text{for $0 < k < l$}  \\
    S_l &:= \bigivee_{i=0}^{l-1} \left( \neg C_i^+ \land \neg C_i^- \right)
\end{align}
The semantics of the formulas $S_k$ is more complex to describe, but we omit a thorough analysis since we only need the following technical lemma.

\begin{lemma}\label{lemma:propertyOfS}
    Let $s$ be a $k$-switching of the model $\SM_l$.
    Then $\SM_l, s \nvDash S_k$, but for every proper subset $t \subsetneq s$ it holds $\SM_l, t \vDash S_k$.
\end{lemma}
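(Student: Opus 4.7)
The plan is to first translate the support conditions for $\neg C_i^+$ and $\neg C_i^-$ on $\SM_l$ into simple set-membership statements about the worlds $w_i^+, w_i^-$, and then read off the support conditions for $S_k$ from its disjunctive structure. Observe that $\SM_l, t \vDash C_i^+$ iff $t \subseteq V_l(q_i) \cap V_l(p_i) = \{w_i^+\}$, so by the implication clause $\SM_l, t \vDash \neg C_i^+$ iff the only substate of $t$ contained in $\{w_i^+\}$ is $\emptyset$, i.e., iff $w_i^+ \notin t$; symmetrically for $\neg C_i^-$. Consequently, on any state $t$ one has $t \vDash \neg C_i^+ \ivee \neg C_i^-$ iff at least one of $w_i^+, w_i^-$ is missing from $t$, and $t \vDash \neg C_i^+ \land \neg C_i^-$ iff both are missing from $t$.

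With this reformulation in hand, the first claim, $\SM_l, s \nvDash S_k$, is a routine check against the definition of a $k$-switching. For indices $i < k$, exactly one of $w_i^\pm$ lies in $s$, so the conjunctive disjunct $\neg C_i^+ \land \neg C_i^-$ fails; for indices $i \geq k$, both $w_i^+$ and $w_i^-$ lie in $s$, so the disjunctive disjunct $\neg C_i^+ \ivee \neg C_i^-$ fails. Every disjunct of $S_k$ thus fails at $s$, uniformly across the three cases $k=0$, $0<k<l$, and $k=l$.

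For the second claim, I would take an arbitrary proper subset $t \subsetneq s$ and pick some witness $w_j^\varepsilon \in s \setminus t$ (with $\varepsilon \in \{+,-\}$). The proof splits on whether $j \geq k$ or $j < k$. If $j \geq k$, the missing world $w_j^\varepsilon$ immediately gives $t \vDash \neg C_j^+ \ivee \neg C_j^-$, which is one of the disjuncts of $S_k$ coming from the second big $\bigivee$. If $j < k$, I use the defining property of a $k$-switching: $s$ already contained only one of $w_j^+, w_j^-$ (namely $w_j^\varepsilon$), so $t$, having lost this one, contains neither, which yields $t \vDash \neg C_j^+ \land \neg C_j^-$ — a disjunct of $S_k$ coming from the first big $\bigivee$. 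In every case $t$ supports some disjunct and hence $t \vDash S_k$.

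The argument is mostly bookkeeping, and the only step that requires care is the $j < k$ subcase: one must use the fact that a $k$-switching is as "thin as possible" on the first $k$ pairs of worlds, so that deleting a single world of index $j < k$ removes an entire pair from $t$. Once the semantic reformulation of $\neg C_i^\pm$ is established and the boundary cases $k=0$ and $k=l$ are noted to fit the same pattern (their big conjunctions/disjunctions being empty or full), both claims follow directly without further technicalities.
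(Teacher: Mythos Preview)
Your proposal is correct and follows essentially the same approach as the paper's own proof: verify that each disjunct of $S_k$ fails at the $k$-switching $s$, and for a proper substate $t$ locate a disjunct witnessed by a missing world $w_j^\varepsilon$, splitting on whether $j<k$ or $j\geq k$. Your preliminary reformulation of $\neg C_i^\pm$ as the set-membership condition $w_i^\pm\notin t$ makes the argument slightly cleaner than the paper's version, but the structure is identical.
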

\begin{proof}
    Firstly, let us show that $s \nvDash S_k$ (for ease of read, in the rest of the proof we will omit the model $\model{S}_l$).
    Since $S_k$ is a disjunction, we just need to check that each disjunct is not satisfied at $s$.
    \begin{itemize}
        \item For $i < k$, by Definition \ref{definition:switching} $s$ contains exactly one world among $w_i^+$ and $w_i^-$.
        This means that $s \vDash C_i^+$ or $s\vDash C_i^-$.
        In either case we have $s \nvDash \neg C_i^+ \land \neg C_i^-$.

        \item For $i \geq k$, by Definition \ref{definition:switching} $s$ contains both worlds $w_i^+$ and $w_i^-$.
        This means that $s \nvDash \neg C_i^+$ (since it contains $w_i^+$) and $s \nvDash \neg C_i^-$ (since it contains $w_i^-$).
        So we have $s \nvDash \neg C_i^+ \ivee \neg C_i^-$.
    \end{itemize}

    \noindent
    Since these are all the disjuncts of $S_k$, the formula is not satisfied at $s$.

    Secondly, let us show that every $t \subsetneq s$ satisfies $S_k$.
    Since $S_k$ is a disjunction, we just need to find a disjunct satisfied by $t$.
    And since the containment $t \subsetneq s$ is strict, there exists a world in $w_i^e$ in $s$ and not in $t$.
    We consider two cases, depending whether $i < k$ or $i\geq k$.
    \begin{itemize}
        \item If $i < k$, by Definition \ref{definition:switching} $t$ does not contain $w_i^+$ nor $w_i^-$ (one of them was already missing from $s$).
        So $t \vDash \neg C_i^+ \ivee \neg C_i^-$, which is a disjunct of $S_k$, thus $t \vDash S_k$.

        \item If $i \geq k$, by Definition \ref{definition:switching} $t$ contains exactly one among $w_i^+$ and $w_i^-$ ($s$ being a $k$-switching contains both of them).
        This means that either $t \vDash \neg C_i^+$ (if $w_i^+ \notin t$) or $t \vDash \neg C_i^-$ (if $w_i^- \notin t$).
        In both cases we have $t \vDash \neg C_i^+ \ivee \neg C_i^-$, which is a disjunt of $S_k$, thus $t \vDash S_k$.
    \end{itemize}

    \noindent
    So it follows that $t \vDash S_k$, as desired.
\end{proof}

\subsection{The translation}
\label{subsection:translation}
We have all the elements to define our encoding of an instance of $\TQBF$ into an instance of $\MC(\inqB)$, that is, an information model, a state and an \emph{inquisitive formula}.
We firstly focus on the most complex part:
defining the inquisitive formula.

For ease of read, we use the symbols $\zeta,\xi,\eta$ to indicate propositional formulas (without quantifiers, so that an arbitrary QBF is of the form $Q_0 x_0 \dots Q_{l-1} x_{l-1} \zeta$) and the symbols $\phi,\psi,\chi$ to indicate inquisitive formulas.
Moreover, we assume that propositional formulas contain only propositional variables from the set $\APset_l = \{x_0,\dots,x_{l-1}\}$.
Finally, we assume that all propositional formulas are in \emph{negation normal form}, that is, for negations only appear in front of atomic propositions.\footnote{Translating a formula in negation normal form is an operation linear in time, thus this assumption does not affect our complexity analysis.}

We give the translation in two steps:
firstly, we define by recursion \emph{two} translations of \emph{propositional formulas} into inquisitive formulas with an associated polarity, a \emph{positive translation} $\zeta^p$ and a \emph{negative translation} $\zeta^n$.
Then we provide, again by mutual recursion, two translations for quantified Boolean formulas with an associated polarity, $( Q_0 x_0 \dots Q_{l-1} x_{l-1}  \zeta )^P$ and $( Q_0 x_0 \dots Q_{l-1} x_{l-1}  \zeta )^N$.

The translations $\zeta^p$ and $\zeta^n$ are defined by the following clauses:
\begin{equation*}
\begin{array}{l@{\hspace{.2em}}l @{\hspace{4em}} l@{\hspace{.2em}}l}
    (x_i)^p &:= q_i \to p_i
        &(x_i)^n &:= q_i \to \neg p_i  \\
    (\neg x_i)^p &:= q_i \to \neg p_i
        &(\neg x_i)^n &:= q_i \to p_i  \\
    (\zeta \land \xi)^p &:= \zeta^p \land \xi^p
        &(\zeta \land \xi)^n &:= \zeta^n \ivee \xi^n  \\
    (\zeta \vee \xi)^p &:= \zeta^p \ivee \xi^p
        &(\zeta \vee \xi)^n &:= \zeta^n \land \xi^n
\end{array}
\end{equation*}

\noindent
This translation allows us to encode the semantics of $\zeta$ in terms of switching states instead of Boolean valuations.
In particular, the formula $\zeta$ is satisfied under a certain Boolean valuation $\sigma$ iff $\zeta^p$ is supported by the corresponding switching $s_\sigma$ (recall Definition \ref{definition:switching}).

\begin{lemma}\label{lemma:propositionalTranslation}
    Let $\sigma: \APset_l \to \{0,1\}$ be a Boolean valuation and let $s_\sigma$ be the corresponding switching of $\SM_l$.
    Then for every propositional formula $\zeta$ we have that:
    \begin{equation*}
    \begin{array}{l @{\qquad\text{iff}\qquad} l @{\qquad\text{iff}\qquad} l}
        \sigma(\zeta) = 1
            &s_\sigma \vDash \zeta^p
            &s_\sigma \nvDash \zeta^n \\
        \sigma(\zeta) = 0
            &s_\sigma \vDash \zeta^n
            &s_\sigma \nvDash \zeta^p
    \end{array}
    \end{equation*}
\end{lemma}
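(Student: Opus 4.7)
The plan is to prove by structural induction on the propositional formula $\zeta$ (which is in negation normal form) the two support-equivalences
\begin{equation*}
\sigma(\zeta) = 1 \;\iff\; s_\sigma \vDash \zeta^p
\qquad\text{and}\qquad
\sigma(\zeta) = 0 \;\iff\; s_\sigma \vDash \zeta^n.
\end{equation*}
The remaining two clauses of the lemma (the ones involving $\nvDash$) then come for free from the Booleanness of $\sigma$: since $\sigma(\zeta)$ takes exactly one of the two values $0, 1$, exactly one of $s_\sigma \vDash \zeta^p$ and $s_\sigma \vDash \zeta^n$ must hold, and hence $s_\sigma \vDash \zeta^p \iff s_\sigma \nvDash \zeta^n$, and symmetrically.

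For the base cases $\zeta = x_i$ and $\zeta = \neg x_i$, the equivalences follow directly from the two characterizations of switchings recalled shortly after Definition \ref{definition:switching}, namely $s_\sigma \vDash q_i \to p_i \iff \sigma(x_i) = 1$ and $s_\sigma \vDash q_i \to \neg p_i \iff \sigma(x_i) = 0$. For the literal $\neg x_i$ one additionally uses $\sigma(\neg x_i) = 1 - \sigma(x_i)$ together with the fact that the translation clauses for $\neg x_i$ swap the positive and negative implications, so the two equivalences at $\neg x_i$ reduce to the corresponding ones at $x_i$.

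The inductive step is a mechanical unfolding of the semantic clauses for $\land$ and $\ivee$ combined with the inductive hypothesis. For $\zeta = \zeta_1 \land \zeta_2$, the positive translation $(\zeta_1 \land \zeta_2)^p = \zeta_1^p \land \zeta_2^p$ is supported at $s_\sigma$ iff both conjuncts are, which by IH is equivalent to $\sigma(\zeta_1) = \sigma(\zeta_2) = 1$, i.e., to $\sigma(\zeta_1 \land \zeta_2) = 1$; dually, the negative translation $(\zeta_1 \land \zeta_2)^n = \zeta_1^n \ivee \zeta_2^n$ is supported at $s_\sigma$ iff at least one disjunct is, which by IH is equivalent to $\sigma(\zeta_1) = 0$ or $\sigma(\zeta_2) = 0$, i.e., to $\sigma(\zeta_1 \land \zeta_2) = 0$. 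The case $\zeta = \zeta_1 \vee \zeta_2$ is entirely symmetric, with the roles of $\land$ and $\ivee$ swapped.

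There is no genuinely hard step: the translation has been engineered precisely so that this induction runs verbatim. The one point worth flagging is the passage from the two support-equivalences to the $\nvDash$ clauses, which relies essentially on the classicality of $\sigma$. In team semantics the formulas $\zeta^p$ and $\zeta^n$ are not a priori the negation of one another, so it is not immediate from their syntactic form that supporting one rules out supporting the other; what does the work is the external trichotomy $\sigma(\zeta) \in \{0,1\}$, which transfers exclusivity from the left-hand sides of the two established equivalences to their right-hand sides.
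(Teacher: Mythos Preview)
Your proposal is correct and follows essentially the same approach as the paper: structural induction on $\zeta$ to establish the two support-equivalences, with the base cases handled via the characterizations $s_\sigma \vDash q_i \to p_i \iff \sigma(x_i)=1$ and $s_\sigma \vDash q_i \to \neg p_i \iff \sigma(x_i)=0$, and the $\nvDash$ clauses then derived from the Booleanness of $\sigma$. The paper's proof is identical in structure and in the final observation that exclusivity is transferred via $\sigma(\zeta)\in\{0,1\}$.
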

\begin{proof}
    We proceed by induction on the structure of the formula $\zeta$.
    We firstly prove that $\sigma(\zeta) = 1 \iff s_\sigma \vDash \zeta^p$ and $\sigma(\zeta) = 0 \iff s_\sigma \vDash \zeta^n$.
    \begin{itemize}
        \item If $\zeta = x_i$, we have
        \begin{equation*}
        \begin{array}{l@{\;\iff\;}l @{\hspace{4em}} l@{\;\iff\;}l}
            \sigma(x_i) = 1  & s_\sigma \vDash r_i \to p_i
                &\sigma(x_i) = 0  &  s_\sigma \vDash r_i \to \neg p_i  \\
            &s_\sigma \vDash (x_i)^p
                &&  s_\sigma \vDash (x_i)^n
        \end{array}
        \end{equation*}

        \item If $\zeta = \neg x_i$, we have
        \begin{equation*}
        \begin{array}{l@{\;\iff\;}l @{\hspace{4em}} l@{\;\iff\;}l}
            \sigma(\neg x_i) = 1  &\sigma(x_i) = 0
                &\sigma(\neg x_i) = 0  &\sigma(x_i) = 1  \\
            &s_\sigma \vDash r_i \to \neg p_i
                &&s_\sigma \vDash r_i \to p_i  \\
            &s_\sigma \vDash (\neg x_i)^p
                &&s_\sigma \vDash (\neg x_i)^n
        \end{array}
        \end{equation*}

        \item If $\zeta = \eta \land \xi$, we have
        \begin{equation*}
        \begin{array}{l@{}l @{\hspace{4em}} l@{}l}
            &\sigma(\eta \land \xi) = 1
                &&  \sigma(\eta \land \xi) = 0  \\
            &\;\iff\;  \sigma(\eta) = \sigma(\xi) = 1
                &&\;\iff\;  \sigma(\eta) = 0 \;\text{or}\; \sigma(\xi) = 0  \\
            &\;\iff\;  s_\sigma \vDash \eta^p \land \xi^p
                &&\;\iff\;  s_\sigma \vDash \eta^n \;\text{or}\; s_\sigma \vDash \xi^n  \\
            &\;\iff\;  s_\sigma \vDash (\eta \land \xi)^p
                &&\;\iff\;  s_\sigma \vDash (\eta \land \xi)^n
        \end{array}
        \end{equation*}

        \item If $\zeta = \eta \vee \xi$, we have
        \begin{equation*}
        \begin{array}{l@{}l @{\hspace{4em}} l@{}l}
            &\sigma(\eta \vee \xi) = 1
                &&  \sigma(\eta \vee \xi) = 0  \\
            &\;\iff\;  \sigma(\eta) = 1 \;\text{or}\; \sigma(\xi) = 1
                &&\;\iff\;  \sigma(\eta) = \sigma(\xi) = 0  \\
            &\;\iff\;  s_\sigma \vDash \eta^p \ivee \xi^p
                &&\;\iff\;  s_\sigma \vDash \eta^n \land \xi^n  \\
            &\;\iff\;  s_\sigma \vDash (\eta \vee \xi)^p
                &&\;\iff\;  s_\sigma \vDash (\eta \vee \xi)^n
        \end{array}
        \end{equation*}
    \end{itemize}

    Secondly, we notice that:
    \begin{gather*}
        s_\sigma \vDash \zeta^p  \;\iff\;  \sigma(\zeta) = 1  \;\iff\;  \sigma(\zeta) \ne 0  \;\iff\;  s_\sigma \nvDash \zeta^n  \\
        s_\sigma \vDash \zeta^n  \;\iff\;  \sigma(\zeta) = 0  \;\iff\;  \sigma(\zeta) \ne 1  \;\iff\;  s_\sigma \nvDash \zeta^p
    \end{gather*}

\noindent
This concludes the proof.

\end{proof}


\noindent
Now we shift our attention to quantified Boolean formulas, and in particular to the formulas $\theta_k := Q_k x_k \dots Q_{l-1} x_{l-1} \zeta$.
Notice that the lower the value $k$, the more quantifiers appear in the formula $\theta_k$, with the limit case being $\theta_l := \zeta$.

As for propositional formulas we define two translations by mutual recursion.
The two translations are indicated with $\theta_k^P$ and $\theta_k^N$.
These translations involve the formulas $D_i$ and $S_i$ introduced in Subsection \ref{subsection:specialFormulas}.

\begin{equation*}
\begin{array}{lll}
    \zeta^P &= &\zeta^p  \\
    \zeta^N &= &\zeta^n  \\[.2em]
    \theta_{k-1}^P  &=  &\left\{\begin{array}{ll}
        D_{k-1} \to \theta_k^P
            &\text{if}\; Q_{k-1} = \forall  \\
        \left( D_{k-1} \to \theta_k^N \right) \to S_k
            &\text{if}\; Q_{k-1} = \exists
    \end{array}\right.  \\[1em]
    \theta_{k-1}^N  &=  &\left\{\begin{array}{ll}
        (D_{k-1} \to \theta_k^P ) \to S_k
            &\text{if}\; Q_{k-1} = \forall  \\
        D_{k-1} \to \theta_k^N
            &\text{if}\; Q_{k-1} = \exists
    \end{array}\right.
\end{array}
\end{equation*}

\noindent
Similarly to the translations $\zeta^p$ and $\zeta^n$, we can prove a preservation result for these translations.

\begin{lemma}\label{lemma:translationWorks}
    Consider the quantified Boolean formula $\theta_k = Q_{k}x_{k}\dots Q_{l-1} x_{l-1} \zeta$ with variables in $\APset_l$.
    Let $\sigma: \APset_{k} \to \{0,1\}$ be a Boolean valuation and $s_\sigma$ the corresponding $k$-switching of $\SM_l$.
    We have that:
    \begin{equation}\label{eq:translation}
    \begin{array}{l c l}
        \sigma(\theta_k) = 1
            &\text{iff}
            &\SM_l, s_\sigma \vDash (\theta_k)^P  \\
        \sigma(\theta_k) = 0
            &\text{iff}
            &\SM_l, s_\sigma \vDash (\theta_k)^N
    \end{array}
    \end{equation}
\end{lemma}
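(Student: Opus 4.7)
The plan is to proceed by downward induction on $k$, from $k=l$ down to $k=0$. The base case ($k=l$) is immediate: $\theta_l$ is the propositional matrix $\zeta$ and the translations reduce to $\zeta^p,\zeta^n$, so Lemma \ref{lemma:propositionalTranslation} delivers both biconditionals at once.

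For the inductive step, I would fix $k \ge 1$, assume the claim for $\theta_k$, and take $\sigma : \APset_{k-1}\to\{0,1\}$ with corresponding $(k-1)$-switching $s_\sigma$. The geometric backbone is the remark that the two extensions $\sigma_0, \sigma_1$ of $\sigma$ to $\APset_k$ are mirrored by the two $k$-switchings $t^- = s_\sigma\setminus\{w_{k-1}^+\} = s_{\sigma_0}$ and $t^+ = s_\sigma\setminus\{w_{k-1}^-\} = s_{\sigma_1}$; by Lemma \ref{lemma:dFormula} these are precisely the maximal substates of $s_\sigma$ supporting $D_{k-1}$. Because every translated subformula is downward persistent (Proposition \ref{proposition:empty and persistency}), a formula of the shape $D_{k-1}\to\phi$ is supported by $s_\sigma$ iff both $t^+$ and $t^-$ support $\phi$. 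This single observation immediately handles the two \emph{direct} subcases $(Q_{k-1},\mathrm{pol})\in\{(\forall,P),(\exists,N)\}$, where the translation is just $D_{k-1}\to\theta_k^{P/N}$ and the inductive hypothesis yields the desired conjunctive condition on $\sigma_0(\theta_k),\sigma_1(\theta_k)$.

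The \emph{twisted} subcases $\{(\forall,N),(\exists,P)\}$ are the hard part: the translation wraps the direct formula in an outer implication of shape $(\,\cdot\,)\to S$, whose job is to simulate a negation inside inquisitive implication. The key intermediate step I would isolate is the meta-equivalence
\[
    s_\sigma \vDash (\psi\to S) \quad\iff\quad s_\sigma\nvDash\psi
\]
for downward-persistent $\psi$. Lemma \ref{lemma:propertyOfS} is exactly what makes this work: applied at the appropriate switching level, it pins $s_\sigma$ down as the unique substate of itself failing $S$, so $s_\sigma$ becomes the only potential counterexample to the outer implication. Plugging $\psi = D_{k-1}\to\theta_k^{P/N}$ then flips the conjunctive condition from the direct case into its negation, which is precisely the disjunctive semantics of $\exists$ (resp.\ the failure of $\forall$).

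The main obstacle I expect is precisely this meta-equivalence, together with the bookkeeping around which index of the $S$-family correctly isolates $s_\sigma$ among its substates. One must be careful to invoke downward persistency at the right moments so that quantification over arbitrary substates of $s_\sigma$ collapses to a finite check at $\{t^+, t^-\}$ (with $s_\sigma$ itself playing an extra role in the twisted cases), and to track the duality between $\theta_k^P$ and $\theta_k^N$ so that the four subcases close consistently and each $P$-translation is mutually exclusive with the corresponding $N$-translation, as required by the biconditional pair.
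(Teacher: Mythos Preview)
Your proposal is correct and follows essentially the same strategy as the paper: downward induction on $k$ with base case Lemma~\ref{lemma:propositionalTranslation}, then Lemma~\ref{lemma:dFormula} plus persistency to reduce the direct cases $(\forall,P),(\exists,N)$ to a check at the two $k$-switchings $t^\pm$, and Lemma~\ref{lemma:propertyOfS} to obtain the negation-simulating meta-equivalence $s_\sigma\vDash(\psi\to S)\iff s_\sigma\nvDash\psi$ for the twisted cases $(\forall,N),(\exists,P)$. Your caution about the $S$-index is well placed and is indeed the crux: Lemma~\ref{lemma:propertyOfS} must be invoked at the level that makes the $(k{-}1)$-switching $s_\sigma$ the \emph{unique} substate of itself failing $S$, so that the outer implication collapses to a single check at $s_\sigma$.
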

\begin{proof}
    We prove the result by induction on the number of quantifiers preceding $\zeta$.
    The base case is
    \begin{equation*}
    \begin{array}{l c l}
        \sigma(\zeta) = 1
            &\text{iff}
            &\SM_l, s_\sigma \vDash (\zeta)^P  \\
        \sigma(\zeta) = 0
            &\text{iff}
            &\SM_l, s_\sigma \vDash (\zeta)^N
    \end{array}
    \end{equation*}
    and this is the statement of Lemma \ref{lemma:propositionalTranslation}.

    As for the inductive step, suppose the equivalences to hold for formulas with $l-k-1$ quantifiers.
    We want to show that the equivalences hold also for formulas with $l-k$ quantifiers.
    So consider the formula $\theta_{k-1} = Q_{k-1} x_{k-1} \dots Q_{l-1} x_{l-1} \zeta$.
    As a useful shorthand, we will write $Q_{k-1} x_{k-1} \theta_k$ for the formula $\theta_{k-1}$, making explicit reference to the relation between the two formulas.

    Let $\sigma : \APset_{k-1} \to \{0,1\}$ and $s_\sigma$ be as in the statement of the lemma.
    We have two cases two consider, depending on whether $Q_{k-1} = \forall$ or $Q_{k-1} = \exists$.
    \begin{itemize}
        \item Suppose $Q_{k-1} = \forall$ and let us firstly focus on the first equivalence of \ref{eq:translation}.
        We have:
        \begin{equation*}
        \begin{array}{ll}
            &\SM_l, s_\sigma \vDash (\forall x_{k-1} \theta_k)^P  \\
            \iff  &\SM_l, s_\sigma \vDash D_{k-1} \to \theta_k^P  \\
            \iff  &\text{For every $t\subseteq s_\sigma$, if } \SM_l, t \vDash D_{k-1} \text{ then } \SM_l, t \vDash \theta_k^P
        \end{array}
        \end{equation*}

        By persistency of the logic (Proposition \ref{proposition:empty and persistency}), we just need to check the condition for the \emph{maximal} substates of $s_\sigma$ satisfying $D_{k-1}$.
        By Lemma \ref{lemma:dFormula} these maximal substates are $t^+ = s_\sigma \setminus\{ w_{k-1}^- \}$ and $t^- = s_\sigma \setminus \{ w_{k-1}^+ \}$, which are both $k$-switchings and substates of $s_\sigma$.
        These switchings correspond respectively to the Boolean valuations $\tau^+, \tau^-: \APset_k \to \{0,1\}$ defined as
        \begin{equation*}
        \begin{array}{ll}
            \tau^+( x_i ) = \left\{  \begin{array}{ll}
                \sigma(x_i)     &\text{if } i<k  \\
                1            &\text{if } i=k
            \end{array}  \right.  \\[1em]
            \tau^-( x_i ) = \left\{  \begin{array}{ll}
                \sigma(x_i)     &\text{if } i<k  \\
                0            &\text{if } i=k
            \end{array}  \right.
        \end{array}
        \end{equation*}

        By inductive hypothesis, we have
        \begin{equation*}
        \begin{array}{l l l}
            \SM_l, t^+ \vDash \theta_k^P
                &\iff
                &\tau^+( \theta_k ) = 1  \\
            \SM_l, t^- \vDash \theta_k^P
                &\iff
                &\tau^-( \theta_k ) = 1
        \end{array}
        \end{equation*}
        So in conclusion we have
        \begin{equation*}
        \begin{array}{ll}
            &\SM_l, s_\sigma \vDash (\forall x_{k-1} \theta_k )^P   \\
            \iff   &\tau^+( \theta_k ) = \tau^-( \theta_k ) = 1   \\
            \iff   &\sigma( \forall x_{k-1} \theta_k ) = \sigma( \theta_{k-1} ) = 1
        \end{array}
        \end{equation*}

        As for the second equivalence of \ref{eq:translation}, we have:
        \begin{equation*}
        \begin{array}{ll}
            &\SM_l, s_\sigma \vDash (\forall x_{k-1} \theta_k)^N  \\
            \iff  &\SM_l, s_\sigma \vDash ( D_{k-1} \to \theta_k^P ) \to S_k  \\
            \iff  &\text{For every $t\subseteq s_\sigma$, if } \SM_l, t \vDash D_{k-1} \to \theta_k^P \text{ then } \SM_l, t \vDash S_k
        \end{array}
        \end{equation*}
        By Lemma \ref{lemma:propertyOfS}, the only substate of $s_\sigma$ not satisfying $S_k$ is $s_\sigma$ itself.
        So the chain of equivalences above can be extended as follows:
        \begin{equation*}
        \begin{array}{ll}
            &\SM_l, s_\sigma \vDash (\forall x_{k-1} \theta_k)^N  \\
            \iff  &\SM_l, s_\sigma \nvDash D_{k-1} \to \theta_k^P  \\
            \iff  &\text{For some $t\subseteq s_\sigma$, } \SM_l, t \vDash D_{k-1} \text{ and } \SM_l, t \nvDash \theta_k^P  \\
            \stackrel{\ast}{\iff}  &\SM_l, t^+ \nvDash \theta_k^P \text{ or } \SM_l, t^- \nvDash \theta_k^P  \\
            \stackrel{\ast\ast}{\iff} &\tau^+(\theta_k) = 0 \text{ or } \tau^-(\theta_k) = 0  \\
            \iff &\sigma(\forall x_{k-1} \theta_k) = \sigma(\theta_{k-1}) = 0
        \end{array}
        \end{equation*}
        where the equivalence $(\ast)$ follows by persistency and the equivalence $(\ast\ast)$ follows by two applications of the inductive hypothesis applied to the $k$-switchings $t^+$ and $t^-$ respectively.

        \item Suppose $Q_k = \exists$.
        Most of the arguments and passages follow closely the structure of the previous case, so we will omit comments to most of them.

        Let us firstly focus on the first equivalence of \ref{eq:translation}.
        We have:
        \begin{equation*}
        \begin{array}{ll}
            &\SM_l, s_\sigma \vDash (\exists x_{k-1} \theta_k)^P  \\
            \iff  &\SM_l, s_\sigma \vDash ( D_{k-1} \to \theta_k^N ) \to S_k  \\
            \iff  &\text{For every $t\subseteq s_\sigma$, if } \SM_l, t \vDash D_{k-1} \to \theta_k^N \text{ then } \SM_l, t \vDash S_k  \\
            \iff  &\SM_l, s_\sigma \nvDash D_{k-1} \to \theta_k^N  \\
            \iff  &\text{For some $t\subseteq s_\sigma$, } \SM_l, t \vDash D_{k-1} \text{ and } \SM_l, t \nvDash \theta_k^N  \\
            \iff  &\SM_l, t^+ \nvDash \theta_k^N \text{ or } \SM_l, t^- \nvDash \theta_k^N  \\
            \stackrel{\ast}{\iff} &\tau^+(\theta_k) = 1 \text{ or } \tau^-(\theta_k) = 1  \\
            \iff &\sigma(\exists x_{k-1} \theta_k) = \sigma(\theta_{k-1}) = 1
        \end{array}
        \end{equation*}
        where the equivalence $(\ast)$ follows by inductive hypothesis.

        As for the second equivalence of \ref{eq:translation}, we have:
        \begin{equation*}
        \begin{array}{ll}
            &\SM_l, s_\sigma \vDash (\exists x_{k-1} \theta_k)^N  \\
            \iff  &\SM_l, s_\sigma \vDash D_{k-1} \to \theta_k^N  \\
            \iff  &\text{For every $t\subseteq s_\sigma$, if } \SM_l, t \vDash D_{k-1} \text{ then } \SM_l, t \vDash \theta_k^N  \\
            \iff  &\SM_l, t^+ \vDash \theta_k^N \text{ and } \SM_l, t^- \vDash \theta_k^N  \\
            \stackrel{\ast}{\iff} &\tau^+( \theta_k ) = \tau^-( \theta_k ) = 0  \\
            \iff &\sigma( \exists x_{k-1} \theta_k ) = \sigma( \theta_{k-1} ) = 0
        \end{array}
        \end{equation*}
        where the equivalence $(\ast)$ follows from inductive hypothesis.

    \end{itemize}

    \noindent
    And this concludes the proof of the inductive step.
    Thus the result follows by induction.

\end{proof}

\noindent
As a direct corollary of the previous lemma we obtain the desired result:
we can reduce an instance of $\TQBF$ to an instance of $\MC(\inqB)$.
\begin{corollary}\label{corollary:reduction}
    Let $\theta := Q_0 x_0 \dots Q_{l-1} x_{l-1} \zeta$ be a closed quantified Boolean formula.
    Then $\theta$ is valid iff $\SM_l, W_l \vDash \theta^P$.
    In other terms:
    \begin{equation*}
        \theta \in \TQBF \quad\text{iff}\quad \tuple{\SM_l,W_l,\theta^P} \in \MC(\inqB)
    \end{equation*}
\end{corollary}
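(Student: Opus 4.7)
The plan is to obtain Corollary~\ref{corollary:reduction} as an essentially immediate specialisation of Lemma~\ref{lemma:translationWorks} at the index $k=0$, combined with the observation that, for closed formulas, validity coincides with truth under the (unique) empty assignment.

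First I would match notation: the formula $\theta = Q_0 x_0 \dots Q_{l-1} x_{l-1} \zeta$ in the statement of the corollary is exactly $\theta_0$ in the notation of the preceding lemma, so it suffices to specialise the equivalence to $k = 0$. For this specialisation to make sense, I would then identify the relevant valuation and switching. Since $\APset_0 = \emptyset$, there is a unique Boolean valuation $\sigma : \APset_0 \to \{0,1\}$, namely the empty function, and inspecting Definition~\ref{definition:switching} a $0$-switching must contain both $w_i^+$ and $w_i^-$ for every $0 \leq i < l$, so the only $0$-switching is $W_l$ itself; hence $s_\sigma = W_l$ under the correspondence $\sigma \mapsto s_\sigma$.

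Next I would translate the closure of $\theta$: because $\theta$ has no free variables, its truth value is independent of any assignment, and ``$\theta$ is valid'' (i.e.\ $\theta \in \TQBF$) is the same as saying that the empty assignment makes $\theta$ true, i.e.\ $\sigma(\theta) = 1$. Plugging this into the first equivalence of Lemma~\ref{lemma:translationWorks} at $k = 0$, $\sigma = \sigma_\emptyset$ and $s_\sigma = W_l$, yields
\begin{equation*}
    \theta \in \TQBF
    \;\iff\; \sigma(\theta) = 1
    \;\iff\; \SM_l, W_l \vDash \theta^P,
\end{equation*}
which is precisely the statement that $\tuple{\SM_l, W_l, \theta^P} \in \MC(\inqB)$.

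There is no real obstacle here; the corollary is a true corollary. The only delicate point is checking that the boundary case $k = 0$ of the lemma is well-posed, which amounts to noticing that the empty Boolean valuation and the full state $W_l$ really are in correspondence under $\sigma \mapsto s_\sigma$ -- and this is immediate from the definitions.
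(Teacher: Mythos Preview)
Your proposal is correct and follows essentially the same approach as the paper: apply Lemma~\ref{lemma:translationWorks} at $k=0$, using that the unique $0$-switching is $W_l$ and the unique Boolean valuation on $\APset_0=\emptyset$ is the empty function, so that validity of the closed formula $\theta$ coincides with $\sigma(\theta)=1$. The paper's proof is slightly terser but makes exactly the same identifications.
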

\begin{proof}
    Recall that the set $W_l$ of all the worlds of $\SM_l$ is a $0$-switching, and that the corresponding valuation is the empty function.
    By Lemma \ref{lemma:translationWorks} (applied for $k=0$) we have that $\theta$ is valid (i.e., satisfied by the empty function) iff $\SM_l,W_l \vDash \theta^P$.
\end{proof}

\subsection{Complexity}
It remains to show that the reduction of $\TQBF$ to $\MC(\inqB)$ presented in Corollary \ref{corollary:reduction} is polynomially bounded in size.
This entails that $\MC(\inqB)$ is a $\PSPACE$-hard problem, since so is $\TQBF$.

Let $\theta$ be the formula $Q_0 x_0 \dots Q_{l-1} x_{l-1} \zeta$ with $\size{\zeta} = h$, and thus $\size{\theta} = \bigTheta(l + h)$.
Using the encoding of information models presented in Section \ref{section:algorithm}, the size of $\SM_l$ is $\card{\{p_0,\dots,p_{l-1},q_0,\dots,q_{l-1}\}} \cdot \card{W_l} =  2l^2 $, which is indeed polynomially bounded by $l+h$.
And the same goes for the encoding of the information state $W_l$, which is of size $2l$.

As for the size of $\theta^P$, the computations are slightly more involved, so we proceed one step at a time.
Since there are several summations and an inductive proof involved, the $\bigO$ notation may easily lead to errors.
For this reason, we firstly present a semi-formal analysis of the complexity involving $\bigO$s in the current section, and we give a more thorough analysis in Appendix \ref{section:explicitAnalysis}.

Firstly, we find bounds for the size of the special formulas presented in Subsection \ref{subsection:specialFormulas}.
Recall that we indicate with $\size{\phi}$ the size of the encoding of a formula $\phi$.
\begin{itemize}
    \item $\size{ C^+_k } = \size{\;  q_k \land p_k  \;} = \bigO(\log(l))$ and $\size{ C^-_k } = \size{\;  q_k \land \neg p_k  \;} = \bigO(\log(l))$.

    \item $\size{ \neg C_k^+ \land \neg C_k^- } = \size{ \neg C_k^+ \ivee \neg C_k^- } = \bigO(\log(l))$.

    \item $\size{ S_0 }  = \bigO(l) + \sum_{i=0}^{l-1} \size{\neg C^+_i \ivee \neg C^-_i }  = \bigO(l) + \bigO(l\log(l) ) = \bigO(l\log(l))$.

    \item For $k > 0$ we have $\size{ S_k } = \bigO(l) + \sum_{i=0}^{k-1} \size{\neg C_i^+ \land \neg C_i^-} + \sum_{i=k}^{l-1} \size{\neg C_i^+ \ivee \neg C_i^-} = \bigO(l) + \bigO(k \log(l)) + \bigO((l-k)\log(l)) = \bigO(l\log(l))$.

    \item $\size{ D_k } = \size{\;  q_k \to ( p_k \ivee \neg p_k)  \;} = \bigO(\log(l))$.

    \item A straightforward inductive argument on the construction of $\zeta^P$ (resp., $\zeta^N$) shows that $\size{\zeta^P} = \bigO(h)$ (resp., $\size{\zeta^N} = \bigO(h)$).
\end{itemize}

\noindent
And now we prove by (descending) induction on $k$ that $\size{ \theta_k^P }$ and $\size{ \theta_k^N }$ are both in $\bigO((l-k-1) l \log(l) + h)$.
The base step of the induction $k=l$ (i.e., when there are no quantifiers involved and the formulas are $\zeta^P$ and $\zeta^N$ respectively) was already considered.
So assume that $\size{ \theta_k^P }$ and $\size{ \theta_k^N }$ are both in $\bigO((l-k-1) l \log(l) + h)$.
We prove only that $\size{ \theta_{k-1}^P }$ is in $\bigO((l-k) l \log(l) + h)$, since the other inequality follows similar computations.
We consider two separate cases, depending on whether $Q_{k-1} = \forall$ or $Q_{k-1} = \exists$.
\begin{equation*}
\begin{array}{ll}
    \size{ (\forall x_{k-1} \theta_k)^P }
    &= \size{ D_{k-1} \to \theta_k^P } \\
    &= \bigO(\log(l)) + \bigO((l-k-1) l \log(l) + h) \\
    &= \bigO((l-k) l \log(l) + h)
\end{array}
\end{equation*}
\begin{equation*}
\begin{array}{ll}
    \size{ (\exists x_{k-1} \theta_k)^P }
    &= \size{ ( D_{k-1} \to \theta_k^N ) \to S_k } \\
    &= \bigO(l \log(l)) + \bigO((l-k-1) l \log(l) + h) + \bigO(l\log(l)) \\
    &= \bigO((l-k) l \log(l) + h)
\end{array}
\end{equation*}

\noindent
And this concludes the inductive proof.
The salient results we thus obtain are the following.

\begin{theorem}\label{theorem:boundSizeFormula}
    For $\theta = Q_0 x_0 \dots Q_{l-1} x_{l-1} \zeta$ a closed quantified Boolean formula, the size of $\theta^P$ is in $\bigO(  l^2 \log(l) + \size{\zeta}  )$.
    In particular, the size of the translation is polynomially bounded by the size of the original formula.
\end{theorem}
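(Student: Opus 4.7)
The plan is to read the theorem directly off the size analysis already assembled in this subsection. Two layers of bounds have been put in place before the statement: first, sizes of the building-block formulas $C_k^+$, $C_k^-$, $D_k$, and $S_k$; second, a descending induction on $k$ that yields $\size{\theta_k^P}, \size{\theta_k^N} \in \bigO((l-k-1)\,l\log(l) + h)$, where $h = \size{\zeta}$. The theorem is then just the instantiation of this inductive bound at $k = 0$, combined with the observation that the size of $\theta$ itself is $\bigTheta(l+h)$.

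First I would re-examine the building-block estimates to confirm the $\log(l)$ factor, which comes from the fact that encoding atomic propositions $p_i$, $q_i$ with indices up to $l$ requires $\bigO(\log l)$ bits each; hence $\size{C_k^{\pm}}, \size{D_k} \in \bigO(\log l)$, and since $S_k$ is an $l$-fold inquisitive disjunction whose disjuncts have size $\bigO(\log l)$, we get $\size{S_k} \in \bigO(l\log l)$. Then I would verify the inductive step: prepending one quantifier to $\theta_k$ either wraps the translation in a single implication $D_{k-1} \to \bullet$ (adding $\bigO(\log l)$) or in the pattern $(D_{k-1} \to \bullet) \to S_k$ (adding $\bigO(l \log l)$). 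In the worst case each of the $l$ quantifiers contributes $\bigO(l \log l)$, for a total of $\bigO(l^2 \log l)$ on top of the base case $\size{\zeta^P} \in \bigO(h)$.

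Setting $k = 0$ in the inductive bound yields $\size{\theta^P} \in \bigO(l^2 \log l + \size{\zeta})$, which is the first assertion of the theorem. The second assertion, that the size of $\theta^P$ is polynomial in $\size{\theta}$, follows at once because $\size{\theta} = \bigTheta(l+h)$, so $l^2 \log l + h$ is polynomially bounded by $l+h$.

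The only real difficulty is bookkeeping: the $\bigO$-notation hides constants that must be tracked across an induction involving nested summations, and it is easy to misattribute a factor of $l$ or $\log l$. This is presumably why the authors relegate a fully explicit computation to the appendix; at the level of a proof plan, one simply chains the $\bigO$-bounds above to reach the stated result.
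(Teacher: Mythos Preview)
Your proposal is correct and follows exactly the paper's own argument: it restates the building-block size estimates for $C_k^\pm$, $D_k$, $S_k$, invokes the descending induction establishing $\size{\theta_k^P},\size{\theta_k^N}\in\bigO((l-k-1)\,l\log l + h)$, and reads off the theorem at $k=0$. You also correctly note, as the paper does, that the $\bigO$-bookkeeping across the induction is delicate and that the explicit constant-tracking is deferred to the appendix.
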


\begin{corollary}\label{corollary:mc inqb is pspace hard}
    The problem $\MC(\inqB)$ is $\PSPACE$-hard.
\end{corollary}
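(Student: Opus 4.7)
The plan is simply to package the two preceding results into a clean reduction argument. I would first observe that the mapping $\theta \mapsto \tuple{\SM_l, W_l, \theta^P}$ introduced in Subsection \ref{subsection:translation} is already proved, by Corollary \ref{corollary:reduction}, to satisfy the many-one correctness condition $\theta \in \TQBF \iff \tuple{\SM_l, W_l, \theta^P} \in \MC(\inqB)$. So the only thing left to verify is that this mapping is computable within the resources appropriate for a $\PSPACE$-hardness reduction---conventionally, in polynomial time (from which polynomial-space computability follows a fortiori).

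For the resource bound, I would invoke Theorem \ref{theorem:boundSizeFormula} to bound $\size{\theta^P} = \bigO(l^2 \log l + \size{\zeta})$, together with the already-noted bounds $\size{\SM_l} = 2l^2$ on the encoding of the model and $\size{W_l} = 2l$ on the encoding of the evaluation state. Since $\size{\theta} = \bigTheta(l + \size{\zeta})$, each of the three components of the output has size polynomial in $\size{\theta}$. A quick inspection of the recursive clauses defining $\zeta^p, \zeta^n$ and $\theta_k^P, \theta_k^N$, together with the direct encoding scheme of Section \ref{section:algorithm}, shows that every piece of output can in fact be produced by a straightforward recursive procedure whose running time is polynomial in $\size{\theta}$, and not merely of polynomial output size.

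With the reduction in hand, the corollary follows by the standard closure property: $\TQBF$ is $\PSPACE$-hard (see, e.g., \cite[Ch.~8]{Sipser:13}), and polynomial-time many-one reducibility transfers $\PSPACE$-hardness to the target problem. I do not expect any real obstacle here---the heavy lifting has already been done in Corollary \ref{corollary:reduction} and Theorem \ref{theorem:boundSizeFormula}. The only minor subtlety worth mentioning explicitly is the distinction between the size bound on $\theta^P$ and the running-time bound on the procedure producing it; both, however, are routine given the strictly recursive form of the translation and the trivial encoding of $\SM_l$ and $W_l$.
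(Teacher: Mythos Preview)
Your proposal is correct and follows the same line as the paper: the corollary is stated without proof, as it is meant to be an immediate consequence of Corollary~\ref{corollary:reduction} (correctness of the reduction) together with Theorem~\ref{theorem:boundSizeFormula} and the preceding size estimates for $\SM_l$ and $W_l$ (polynomial bound), combined with the $\PSPACE$-hardness of $\TQBF$. Your explicit remark that the translation is not merely polynomially bounded in output size but also polynomial-time computable is a welcome clarification that the paper leaves implicit.
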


\noindent
As anticipated, this was the last ingredient needed to show that the four problems considered are $\PSPACE$-complete.

\begin{theorem}
    The problems $\MC(\inqB)$, $\MCanti(\inqB)$, $\MC(\inqM)$ and $\MCanti(\inqM)$ are $\PSPACE$-complete.
\end{theorem}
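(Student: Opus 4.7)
The plan is simply to assemble the pieces already in place. From Theorem~\ref{theorem:mc inqm and mcanti inqm are pspace} we already have $\MC(\inqM), \MCanti(\inqM) \in \PSPACE$. To transfer membership down to the propositional versions, I would exhibit a trivial polynomial-time reduction of $\MC(\inqB)$ (resp., $\MCanti(\inqB)$) to $\MC(\inqM)$ (resp., $\MCanti(\inqM)$): given an input $\tuple{\model{M}, s, \phi}$ with $\model{M} = \tuple{W, V}$ and $\phi \in \inqB$, return $\tuple{\tuple{W, V, \Sigma}, s, \phi}$, where $\Sigma(w) = \{\emptyset\}$ for every $w \in W$. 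This $\Sigma$ is a legitimate inquisitive state map (it is non-empty and downward closed), and since $\phi$ contains no occurrence of $\Box$ or $\window$ the recursion in the semantics never consults $\Sigma$, so support is preserved in both directions. This places all four problems in $\PSPACE$.

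For hardness, Corollary~\ref{corollary:mc inqb is pspace hard} already gives that $\MC(\inqB)$ is $\PSPACE$-hard, and the trivial embedding just described propagates this to $\MC(\inqM)$. To handle the anti-problems I would invoke Savitch's theorem, which tells us that $\PSPACE$ is closed under complement, so that the complement of $\TQBF$ is also $\PSPACE$-complete. The reduction of Corollary~\ref{corollary:reduction} then witnesses the required hardness, since the equivalence $\theta \in \TQBF \iff \tuple{\SM_l, W_l, \theta^P} \in \MC(\inqB)$ is equivalent to $\theta \notin \TQBF \iff \tuple{\SM_l, W_l, \theta^P} \in \MCanti(\inqB)$; thus the same polynomial map, viewed as a reduction from the complement of $\TQBF$, establishes $\PSPACE$-hardness of $\MCanti(\inqB)$. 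Composing once more with the $\inqB$-to-$\inqM$ embedding yields $\PSPACE$-hardness of $\MCanti(\inqM)$.

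There is no real obstacle here: the statement is essentially a bookkeeping corollary of Theorem~\ref{theorem:mc inqm and mcanti inqm are pspace} and Corollary~\ref{corollary:mc inqb is pspace hard}, joined with two elementary observations, namely the embedding of $\inqB$-models into $\inqM$-models and the closure of $\PSPACE$ under complementation.
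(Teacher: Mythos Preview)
Your proposal is correct and mirrors the paper's intended argument: the theorem is a bookkeeping corollary of Theorem~\ref{theorem:mc inqm and mcanti inqm are pspace} and Corollary~\ref{corollary:mc inqb is pspace hard}, glued together by the obvious embedding of $\inqB$-instances into $\inqM$-instances. The only cosmetic difference is in how hardness of the anti-problems is obtained: the paper alludes to a direct reduction of $\MC(\inqB)$ to $\MCanti(\inqB)$ (which can be realized via the dual translation $\theta^N$ from Lemma~\ref{lemma:translationWorks}, since $\theta\in\TQBF$ iff $\tuple{\SM_l,W_l,\theta^N}\in\MCanti(\inqB)$), whereas you route through closure of $\PSPACE$ under complement; both are equally valid and equally trivial.
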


\section{Conclusions}\label{section:conclusions}

In this paper we formalized and studied the model checking problems $\MC(\inqB)$ and $\MC(\inqM)$ for inquisitive propositional logic and inquisitive modal logic, and studied their complexity.
In Section \ref{section:algorithm} we provided an algorithm to solve the problem $\MC(\inqM)$ and proved its correctness.
This allowed us to show that the complexity of the problem is in the class $\PSPACE$.
In Section \ref{section:complexity} we provided a polynomial-space reduction of the \emph{true quantified Boolean formula} problem to $\MC(\inqB)$, thus showing that the problem is also $\PSPACE$-hard.
Putting these results together, we inferred that both problems are $\PSPACE$-complete.

This paper is the first investigation of computational complexity issues for inquisitive logics.
A venue for further investigation is whether the proof presented adapts to other inquisitive logics, such as \emph{inquisitive epistemic logic} \cite[Ch.~7]{Ciardelli:16}, \emph{inquisitive intuitionistic logic} \cite{Ciardelli:20inqi} or first-order versions of the system such as $\inqBQ$ \cite[Ch.~4]{Ciardelli:16}.

\bibliographystyle{plain}
\bibliography{complexity_mc}

\begin{thebibliography}{10}

\bibitem{Ciardelli:16}
Ivano Ciardelli.
\newblock {\em Questions in logic}.
\newblock PhD thesis, Institute for Logic, Language and Computation, University
  of Amsterdam, 2016.

\bibitem{Ciardelli:20inqi}
Ivano Ciardelli, Rosalie Iemhoff, and Fan Yang.
\newblock Questions and dependency in intuitionistic logic.
\newblock {\em Notre Dame Journal of Formal Logic}, 61(1):75--115, 2020.

\bibitem{CiardelliRoelofsen:15idel}
Ivano Ciardelli and Floris Roelofsen.
\newblock Inquisitive dynamic epistemic logic.
\newblock {\em Synthese}, 192(6):1643--1687, 2015.

\bibitem{Clarke:01}
Edmund~M. Clarke and Bernd-Holger Schlingloff.
\newblock Chapter 24 - model checking.
\newblock In Alan Robinson and Andrei Voronkov, editors, {\em Handbook of
  Automated Reasoning}, Handbook of Automated Reasoning, pages 1635--1790.
  North-Holland, Amsterdam, 2001.

\bibitem{Ebbing:12}
Johannes Ebbing and Peter Lohmann.
\newblock Complexity of model checking for modal dependence logic.
\newblock In M{\'a}ria Bielikov{\'a}, Gerhard Friedrich, Georg Gottlob, Stefan
  Katzenbeisser, and Gy{\"o}rgy Tur{\'a}n, editors, {\em SOFSEM 2012: Theory
  and Practice of Computer Science}, pages 226--237, Berlin, Heidelberg, 2012.
  Springer Berlin Heidelberg.

\bibitem{Ebbing:13}
Johannes Ebbing, Peter Lohmann, and Fan Yang.
\newblock Model checking for modal intuitionistic dependence logic.
\newblock In Guram Bezhanishvili, Sebastian L{\"o}bner, Vincenzo Marra, and
  Frank Richter, editors, {\em Logic, Language, and Computation}, pages
  231--256, Berlin, Heidelberg, 2013. Springer Berlin Heidelberg.

\bibitem{Sipser:13}
Michael Sipser.
\newblock {\em Introduction to the Theory of Computation}.
\newblock Course Technology, Boston, MA, third edition, 2013.

\bibitem{Vardi:82}
Moshe~Y. Vardi.
\newblock The complexity of relational query languages (extended abstract).
\newblock In {\em Proceedings of the Fourteenth Annual ACM Symposium on Theory
  of Computing}, STOC '82, pages 137--146, New York, NY, USA, 1982. Association
  for Computing Machinery.

\bibitem{Yang:14}
Fan Yang.
\newblock {\em On extensions and variants of dependence logic: A study of
  intuitionistic connectives in the team semantics setting}.
\newblock PhD thesis, University of Helsinki, 2014.

\bibitem{Zeuner:20}
Max Zeuner.
\newblock Model checking for inquisitive logic.
\newblock Master Thesis, Ludwig Maximilian University, 2020.

\end{thebibliography}

\appendix
\section{Explicit analysis}\label{section:explicitAnalysis}

\noindent
In this appendix we give an explicit analysis of the computations presented at the end of Section \ref{section:complexity}, leading to Theorem \ref{theorem:boundSizeFormula}.
Throughout this appendix, we indicate with $c_i$ constants dependent only on the encoding used for formulas, chosen opportunely to satisfy the inequalities involved.

Firstly, we give an explicit analysis of the sizes of the formulas presented in Subsection \ref{subsection:specialFormulas}.
\begin{itemize}
    \item $\size{ C^+_k } = \size{\;  q_k \land p_k  \;} \leq c_1 \log(l)$ and $\size{ C^-_k } = \size{\;  q_k \land \neg p_k  \;} \leq c_1 \log(l)$.

    \item $\size{ \neg C_i^+ \land \neg C_i^- } = \size{ \neg C_i^+ \ivee \neg C_i^- } \leq c_2 + \size{ C_i^+ } + \size{C_i^-} \leq c_3 \log(l)$.

    \item $\size{ S_0 }  \leq c_4 l + \sum_{i=0}^{l-1} \size{\neg C^+_i \ivee \neg C^-_i }  \leq c_4 l + l \cdot c_2\log(l) \leq c_5 l \log(l)$.

    \item For $k > 0$ we have $\size{ S_k } = c_6 l + \sum_{i=0}^{k-1} \size{\neg C_i^+ \land \neg C_i^-} + \sum_{i=k}^{l-1} \size{\neg C_i^+ \ivee \neg C_i^-} \leq c_6 l + k \cdot c_3 \log(l) + (l-k) \cdot c_3 \log(l) \leq c_7 l \log(l)$.

    \item To make computations less tedious, we define $c_8 = \max(c_5,c_7)$, so that $\size{S_k} \leq c_8 l \log(l)$ independently from the value of $k$.

    \item $\size{ D_k } = \size{\;  q_k \to ( p_k \ivee \neg p_k)  \;} \leq c_9 \log(l)$.

    \item A straightforward inductive argument on the construction of $\zeta^P$ (resp., $\zeta^N$) shows that $\size{\zeta^P} = \bigO(h)$ (resp., $\size{\zeta^N} = \bigO(h)$).
\end{itemize}

\noindent
Secondly, we provide an explicit analysis of the size of $\size{ (Q_k x_k \dots Q_{l-1} x_{l-1} \zeta)^P }$ and $\size{ (Q_k x_k \dots Q_{l-1} x_{l-1} \zeta)^N }$.
As in Subsection \ref{subsection:translation} we use the shorthand $\theta_k := Q_k x_k \dots Q_{l-1} x_{l-1} \zeta$, and indicate with $Q_{k-1} x_{k-1} \theta_k$ the formula $\theta_{k-1}$ to make the relation between the formulas $\theta_k$ and $\theta_{k-1}$ explicit.
We show by downward induction on $k$ that both quantities are bounded by $c ( (l-k-1) l \log(l) + h)$ for an opportune choice of the constant $c$.
The base step of the induction ($k=l$, i.e., when no quantifiers are involved and the the formulas considered are $\zeta^P$ and $\zeta^N$) has already been taken care of.
So assume that $\size{ \theta_k^P } \leq c ( (l-k) l \log(l) + h)$ and $\size{ \theta_k^N } \leq c ( (l-k) l \log(l) + h)$.

As for the inductive step, we firstly prove the inequality $\size{ \theta_{k-1}^P } \leq c ( (l-(k-1)) l \log(l) + h)$.
We consider two separate cases, depending on whether $Q_{k-1} = \forall$ or $Q_{k-1} = \exists$.
\begin{equation*}
\begin{array}{lll}
    \size{ (\forall x_{k-1} \theta_k)^P }
    &= \size{ D_{k-1} \to \theta_k^P } \\
    &\leq c_{10} \log(l) + c ( (l-k) l \log(l) + h) \\
    &\leq c \log(l) + c(l-k)l \log(l) + c h
        &\text{(if $c \geq c_{10}$)}  \\
    &\leq c (\; (l-(k-1)) l \log(l) + h \;)
\end{array}
\end{equation*}
\begin{equation*}
\begin{array}{lll}
    \size{ (\exists x_{k-1} \theta_k)^P }
    &= \size{ ( D_{k-1} \to \theta_k^N ) \to S_k } \\
    &\leq c_{11} l \log(l) + c ( (l-k) l \log(l) + h) \\
    &\leq c l \log(l) + c(l-k)l \log(l) + c h
        &\text{(if $c \geq c_{11}$)}  \\
    &\leq c (\; (l-(k-1)) l \log(l) + h \;)
\end{array}
\end{equation*}

\noindent
Secondly we prove the inequality $\size{ \theta_{k-1}^N } \leq c ( (l-(k-1)) l \log(l) + h)$.
Also in this case we distinguish the cases $Q_{k-1} = \forall$ and $Q_{k-1} = \exists$.
\begin{equation*}
\begin{array}{lll}
    \size{ (\forall x_{k-1} \theta_k)^N }
    &= \size{ ( D_k \to \theta_k^P ) \to S_k } \\
    &\leq c_{12} l \log(l) + c ( (l-k) l \log(l) + h) \\
    &\leq c l \log(l) + c(l-k)l \log(l) + c h
        &\text{(if $c \geq c_{12}$)}  \\
    &\leq c (\; (l-(k-1)) l \log(l) + h \;)
\end{array}
\end{equation*}
\begin{equation*}
\begin{array}{lll}
    \size{ (\exists x_{k-1} \theta_k)^N }
    &= \size{ D_k \to \theta_k^N } \\
    &\leq c_{13} \log(l) + c ( (l-k) l \log(l) + h) \\
    &\leq c \log(l) + c(l-k)l \log(l) + c h
        &\text{(if $c \geq c_{13}$)}  \\
    &\leq c (\; (l-(k-1)) l \log(l) + h \;)
\end{array}
\end{equation*}

\noindent
So by choosing $c := \max(c_{10}, c_{11}, c_{12}, c_{13})$ we meet all the conditions necessary for the inequality to hold.
This concludes our inductive proof, and shows that both $\size{ \theta_k^P }$ and $\size{ \theta_k^N }$ are in $\bigO(( (l-k) l \log(l) + h))$, as desired.

\end{document}